\def\N{\mathbb N}
\def\1{\mathbf 1}
\newtheorem{theorem}{Theorem}[section]
\newtheorem {lemma}[theorem]{Lemma}
\newtheorem {proposition}[theorem]{Proposition}
\newtheorem {definition}[theorem]{Definition}
\newtheorem {corollary}[theorem]{Corollary}
\newtheorem {example}[theorem]{Example}
\newtheorem {remark}[theorem]{Remark}
\newcommand{\nc}{\newcommand}
\nc{\smop}[1]{\mathop{\hbox {\eightrm #1} }\nolimits}
\nc{\un}{{\mathbf 1}}
\nc{\id}{{\mathrm{id}}}
\begin{document}

%%%%%%%%%%%%%%%%%%%%%%%%%%%%%%%%

\title[Effective Action in Free Probability]{Effective Action in Free Probability}

%%%%%%%%%%%%%%%%%%%%%%%%%%%%%%%%

\author[K.~Ebrahimi-Fard]{Kurusch~Ebrahimi-Fard}
\address{Department of Mathematical Sciences, NTNU, NO 7491 Trondheim, Norway.}
\email{kurusch.ebrahimi-fard@ntnu.no}
\urladdr{https://folk.ntnu.no/kurusche/}

\author[F.~Patras]{Fr\'ed\'eric~Patras}
\address{Univ.~C\^ote d'Azur, CNRS, UMR 7351, Parc Valrose, 06108 Nice Cedex 02, France.}
\email{patras@unice.fr}
\urladdr{www-math.unice.fr/$\sim$patras}

\date{\today}

\maketitle

%%%%%%%%%%%%%%%%%%%%%%%%%%%%%%%%

\begin{abstract}
Recent works have explored relations between classical and quantum statistical physics on the one hand and Voiculescu's theory of free probability on the other. Motivated by these results, the present work focuses on the notion of effective action, which is closely related to the large deviation rate function in classical probability and one-particle irreducible correlation functions in quantum field theories. The central aim is to understand how it can be defined and studied in free probability.  In this respect, we introduce a suitable diagrammatic formalism.
\end{abstract}

%%%%%%%%%%%%%%%%%%%%%%%%%%%%%%%%

%%%%%%%%%%%%%%%%%%%%%%%%%%%%%%%%

\section{Introduction}
\label{sec:intro}

Recent works \cite{Bauer,Hruza} brought to light unexpected relations between large deviations for exclusion processes in classical as well as quantum statistical physics and Voiculescu's free probability. Motivated by these findings along with other, more classical, results and problems (see below), we investigate in the present article how the notion of effective action (the Legendre transform of the free energy or connected Green's functions), which is closely related to the large deviation rate function in classical probability and to one-particle irreducible (1PI) correlation (or Green's) functions in field theories, can be defined and studied in the context of free probability.  Our approach is algebraic and motivates the introduction of a set of diagrammatical rules suitable for representing relations between non-commutative generating functions.

The work at hand builds on two ideas. First, effective action --- 1PI Green's function, in the language of quantum or statistical field theories --- can be accounted for algebraically. This phenomenon was investigated using Hopf algebra techniques by one of us together with Ch.~Brouder \cite{bp} in the classical framework of  interacting quantum fields. Here, we would like to extend the idea to Voiculescu's theory of free probability \cite{voiculescu1,voiculescu2}. This requires novel tools due to constraints imposed by the notion of freeness that cannot be accounted for using the Hopf algebraic approach introduced in \cite{bp}. 

Second, planar quantum field theory (QFT) and the related notion of master field equations provide a setting for the perturbative description of the so-called planar sector of quantum field theories such as, for instance, quantum chromodynamics (QCD) and the behaviour of large-$N$ matrix models. It can be traced back to the 1974 work of 'tHooft on strong interactions \cite{Hooft}.

In the planar sector of QFTs, the computation of symmetry factors changes (and simplifies) compared to  usual quantum and statistical field theories. In fact, planarity is reflected in a strictly non-commutative nature of the theory, which results in a rather substantial deviation from the classical picture of the relations between different types of Green's functions. Indeed, it turns out that the description of these relations, as presented by Cvitanovich et al.~in \cite{cvitanovic1,cvitanovic2}, is closely related to Speicher's formulation of the relations between moments and free cumulants \cite{biane,nicaspeicher,novaksniady,speicher} in the context of Voiculescu's free probability. 
The connection between free probability and planar QFT has been explored in several works, in relation to master field equations, see e.g.~\cite{douglas,gopagross}. The latter reference points out for example (see p.~391) that ``the master field function $M(z)$ is such that $zM(z)$ is the generating functional of connected Green's functions''. In the univariate case, the master field would then identify with the ``conjugate field'', which we will introduce to perform Legendre transformations and to define effective action in free probability. This is one of the phenomena we came across that certainly deserves further investigations.

As for the planar graph concept of one-particle irreducibility (corresponding to connected Feynman diagrams that remain connected if one edge is removed), it does not seem to have generated much attention. In general, works on planar one-particle irreducibility are related to aspects of the planar sector of a QFT with $\phi^3$ and/or $\phi^4$ interactions, or, --- as in 'tHooft's original work --- to QCD. As a corollary of this fact, it follows that perturbative expansions in these works are not always suited to be transferred to the realm of free probability, where perturbative expansions of Green's/correlation functions in terms of Feynman diagrams, characteristic of quantum or statistical field theories, do not appear in general, and where the very notion of one-particle irreducibility  does not therefore immediately make sense. In fact, we shall define the free probability analog of 1PI Green's functions without ever appealing to the graphical notion of one-particle irreducibility.

Concretely, we recall that planar 1PI Green's functions appeared in the 1977 work \cite{Koplik} by Koplik, Neveu and Nussinov, which explored combinatorial and asymptotic properties of the planar sector of $\phi^3$- and $\phi^4$-theories. An extremely influential work in the general context of topological expansions was Br\'ezin's et al.~ landmark 1978 article {\it Planar Diagrams} \cite{Brezin}. However, one may remark that in there the treatment of one-particle irreducibility is brief and mostly limited to examples (related to $\phi^4$ interaction, see e.g.~p.~43). For the purposes of the present article, the main references, and the ones we will dwell on, are later works by Cvitanovich et al.~\cite{cvitanovic1,cvitanovic2}, where the relations between various planar Green's functions were studied graphically as well as from the generating series point of view. We should point out that the graphical calculus we will develop directly applies to planar QFT and somehow systematises Cvitanovich's et al.~presentation.

In free probability, the notions of effective action and one-particle irreducibility have, to the best of our knowledge, not been addressed specifically yet. The related references we are aware of point indeed in another direction than the present article. We refer to \cite[Chap. 7]{Mingo} and references therein. The idea there is to investigate large deviation phenomena for random matrices, leading in particular to a definition of free entropy and large deviation rate functions (as in the Ben Arous--Guionnet Theorem \cite[Chap.~7, Thm.~3]{Mingo}), mostly by analytic means. The problem of studying large deviations in this context has however direct connections with the present article. Firstly, it relates to our initial motivation, i.e., recent results connecting large deviations (but for exclusion processes) to free probability. Secondly, as emphasised in \cite[Sect.~7.2]{Mingo}, in the classical case, large deviations phenomena are accounted for by a large deviation rate function obtained by Legendre transform from cumulant generating series --- we will similarly define the free effective action by a suitable non-commutative Legendre transform of the free cumulant generating series.

\smallskip

The aim of the present article is ultimately twofold: first, it provides a mathematically rigorous definition of the notion of effective action in the context of the theory of free probability, by drawing analogy with the definition of 1PI Green's functions in planar QFT; second, we develop a set of diagrammatical rules on planar trees with decorated leaves and vertices suitable for a  graphical representation of the recursive construction of the terms of the formal power series expansion of the effective action.

Lastly, we should mention that in a previous work \cite{EP3}, we explained how various algebraic structures that we had introduced in free probability \cite{EP1,EP2} can be used to revisit planar QFT. In the preprint version of our article\footnote{\href{https://arxiv.org/abs/1509.08678}{arXiv:1509.08678v1}}, we investigated in particular planar connected Green's functions --- corresponding to free cumulants in free probability --- but also planar 1PI Green's functions. However, the latter part was preliminary and we decided to publish only the part relevant to the relation between planar full and connected Green's functions. The present article is an update to the approach to planar 1PI Green's functions as presented in \cite{EP3}, including re-elaboration and upgrading the materials contained in the aforementioned preprint. Graphical representations have been in particular improved.

\medskip

The paper is organised as follows. For the sake of completeness, we briefly recall in the next section fundamental notions in free probability and develop foundations of planar functional calculus, extending formulas to be found in Cvitanovich et al.~\cite{cvitanovic1,cvitanovic2} to the algebraic approach to non-commutative probability theory developed in reference \cite{EPTZ}. We introduce then the effective action --- by Legendre transform --- together with elements and formulas of planar functional calculus adapted to the study of effective action. The last section develops a diagrammatical formulation of the recursive construction of the terms in the formal power series expansion of the effective actions. The relevant diagrams are disks inside which certain planar trees are drawn, whose leaves are marked points of the boundary. These marked leaves are totally ordered anti-clockwise. The concept of characteristic path in these trees plays a specific role. It is the extremal path joining the first and last leaves. Higher order diagrams, corresponding to components in the expansion of higher order terms of the effective action can be constructed inductively from local operations on the vertices and edges of this path.

\medskip

{\bf{Notation}}: The set of {\it strictly} positive integers is denoted $\N$. All algebraic structures are defined over a commutative base field  $\mathbb K$ of characteristic zero, with unit denoted $1_{\mathbb K}$ (or simply $1$ when no confusion can arise). If not stated otherwise, Einstein's summation convention is in place, i.e., repeated indices imply summation. For example, for an index $i$ varying in $[n]:=\{1,\ldots,n\}$, $a_iX_i:=\sum_{i=1}^n a_iX_i$. We let ${\mathbf X}^\ast$ denote the set of {\it non-empty} sequences, called words, over the alphabet ${\mathbf X}=\{x_1,\ldots, x_n\}$. The empty word $\emptyset$ makes ${\mathbf X}^\ast \cup\{\emptyset\}$ the free monoid defined over ${\mathbf X}$. As usual, $\delta_{i,j}$ stands for the Kronecker symbol (which equals $1$ if $i=j$ and $0$ else).

\medskip

{\bf{Acknowledgements}}: KEF is supported by the Research Council of Norway through
project 302831 “Computational Dynamics and Stochastics on Manifolds” (CODYSMA). FP acknowledges support from the ANR project Algebraic Combinatorics, Renormalization, Free probability and Operads – CARPLO (Project-ANR-20-CE40-0007) and from the ANR -- FWF project PAGCAP.

%%%%%%%%%%%%%%%%%%
%%%%%%%%%%%%%%%%%%

\section{Free Probability and free cumulants}
\label{sec:freeproba}

Let $(A,\varphi)$ be a non-commutative probability space \cite{nicaspeicher}, that is, it consists of an associative $\mathbb K$-algebra $A$ with unit $1_A$ and a unital linear form $\varphi$ ($\varphi(1_A)=1$). Elements of $A$ are called non-commutative random variables. Standard examples of such spaces are algebras of random matrices, with $\varphi$ being the expectation of the trace (normalised by $1/N$ for $N \times N$ matrices, so that $\varphi$ is unital), and algebras of observables in quantum mechanics. In the latter case the map $\varphi$ is then the average value of observables defined over the ground state of the quantum system. 

To a family $a_1,\ldots,a_n$ of elements of $A$ is then associated the generating series of multivariate moments and free cumulants,
\begin{equation}
\label{eq:moments}
	\mathrm{M}({\mathbf x})
	:= 1 + \sum_{q>0}\sum_{0< i_1,\ldots,i_q\leq n} 
	\mathrm{m}^{(q)}_{i_1 \cdots i_q} x_{i_1} \cdots x_{i_q}
\end{equation}
respectively
\begin{equation}
\label{eq:cumulants}
	\mathrm{K}({\mathbf x})
	:= \sum_{p>0} \sum_{0 < i_1,\ldots,i_p \leq n}
	\mathrm{k}^{(p)}_{i_1 \cdots i_p} x_{i_1} \cdots x_{i_p},
\end{equation}
where $\mathrm{m}^{(q)}_{i_1 \cdots i_q}:=\varphi(a_{i_1}\cdots a_{i_q}) \in \mathbb K$ --- the cumulants $\mathrm{k}^{(p)}_{i_1 \cdots i_p} \in\mathbb K$ are defined below. Here the formal variables $\{x_i\}_{1\leq i\leq n}$ do not commute, so that $\mathrm{M}=\mathrm{M}({\mathbf x})$ and $\mathrm{K}=\mathrm{K}({\mathbf x})$ are non-commutative formal power series. In particular the coefficients $\mathrm{m}^{(q)}_{i_1 \cdots i_q}$ and $\mathrm{k}^{(p)}_{i_1 \cdots i_p}$ are, in general, not invariant by permutation of the indices.

The generating series of moments \eqref{eq:moments} and free cumulants \eqref{eq:cumulants} are related by the functional equation \cite{nicaspeicher,speicher}: 
\begin{equation}
\label{eq:speicher}
	\mathrm{M}({\mathbf x}) 
	=1 + \mathrm{K}({\mathbf x}\mathrm{M}({\mathbf x})),
\end{equation}
where the term $\mathrm{K}({\mathbf x}\mathrm{M}({\mathbf x}))$ on the righthand side means that $x_i\mathrm{M}({\mathbf x})$ is substituted for the letter $x_i$ in the formal power series expansion of $\mathrm{K}({\mathbf x})$. This formula can be considered as a definition of free cumulants $\mathrm{k}^{(p)}_{i_1 \cdots i_p}$. In low degrees, we find the well-known relations \cite{Mingo,nicaspeicher}
\allowdisplaybreaks{
\begin{align}
	\mathrm{m}^{(0)} 
	&= 1  											\nonumber \\
	\mathrm{m}_{j_1}^{(1)} 
	&= \mathrm{k}_{j_1}^{(1)} 							\nonumber \\ 
	\mathrm{m}_{j_1j_2}^{(2)} 
	&= \mathrm{k}_{j_1}^{(1)}\mathrm{m}_{j_2}^{(1)} 
					+ \mathrm{k}_{j_1j_2}^{(2)}\mathrm{m}^{(0)}
			= \mathrm{k}_{j_1}^{(1)}\mathrm{k}_{j_2}^{(1)} 
					+ \mathrm{k}_{j_1j_2}^{(2)}			\nonumber \\ 
	\mathrm{m}_{j_1j_2j_3}^{(3)} 
	&= 	 {\mathrm{k}_{j_1}^{(1)}}\mathrm{m}_{j_2j_3}^{(2)}
						+  {\mathrm{k}_{j_1j_2}^{(2)}}\mathrm{m}_{j_3}^{(1)}
						+  {\mathrm{k}_{j_1j_3}^{(2)}}\mathrm{m}_{j_2}^{(1)}
						+  \mathrm{k}_{j_1j_2j_3}^{(3)}	\mathrm{m}^{(0)} 		\nonumber \\
	&= {\mathrm{k}_{j_1}^{(1)}}{\mathrm{k}_{j_2}^{(1)}}{\mathrm{k}_{j_3}^{(1)}} 
						+ \mathrm{k}_{j_1}^{(1)} \mathrm{k}_{j_2j_3}^{(2)}
						+ \mathrm{k}_{j_1j_2}^{(2)} \mathrm{k}_{j_3}^{(1)} 
						+ \mathrm{k}_{j_1j_3}^{(2)} \mathrm{k}_{j_2}^{(1)}  
						+ \mathrm{k}_{j_1j_2j_3}^{(3)}						\nonumber \\
	\mathrm{m}_{j_1j_2j_3j_4}^{(4)} 
	&= {\mathrm{k}_{j_1}^{(1)}}\mathrm{m}_{j_2j_3j_4}^{(3)} 
				+ {\mathrm{k}_{j_1j_2}^{(2)}}\mathrm{m}_{j_3j_4}^{(2)} 
				+ {\mathrm{k}_{j_1j_3}^{(2)}}\mathrm{m}_{j_2}^{(1)}\mathrm{m}_{j_4}^{(1)} 
				+ {\mathrm{k}_{j_1j_4}^{(2)}}\mathrm{m}_{j_2j_3}^{(2)} 			\nonumber \\
	&   \quad + {\mathrm{k}_{j_1j_2j_3}^{(3)}}\mathrm{m}_{j_4}^{(1)} 
				+ {\mathrm{k}_{j_1j_2j_4}^{(3)}}\mathrm{m}_{j_3}^{(1)} 
				+ {\mathrm{k}_{j_1j_3j_4}^{(3)}}\mathrm{m}_{j_2}^{(1)}  
				+ \mathrm{k}_{j_1j_2j_3j_4}^{(4)}\mathrm{m}^{(0)}. 				\nonumber 
\end{align}}

%%%%%%%%%%%%%%%%%%
%%%%%%%%%%%%%%%%%%

\section{Planar algebraic calculus}
\label{sec:series}

The constructions in this section aim at describing the proper algebraic framework to deal with formulas like those relating moments and free cumulants \eqref{eq:speicher} as well as with more advanced formulas such as the ones we will encounter later on. They include (up to some minor variants) and complement those in reference \cite{EPTZ}. We remark that the section title ``free algebraic calculus'' would have been more meaningful than ``planar algebraic calculus'', but it also would have likely created ambiguities and confusion from the viewpoint of established mathematical terminology. 

The planar algebraic formalism appears to govern free probability but also Boolean and monotone probability, providing therefore a joint framework for the three theories. In fact, they appear more and more to be entangled, so that various advanced computations in free probability appeal to ideas and formulas in the other two theories --- see \cite{EPTZ} for explanations and references.

Given the alphabet ${\mathbf X}=\{x_1,x_2,x_3,\ldots, x_n\}$, we denote by $R_{\mathbf X}$ the ring of non-commutative formal power series with coefficients in the base field ${\mathbb K}$
$$
	R_{\mathbf X} \coloneq {\mathbb K} \langle\langle x_1,x_2,x_3,\ldots , x_n\rangle\rangle.
$$ 

An element $f=f({\mathbf x}) \in R_{\mathbf X}$ can be written
\[
	f({\mathbf x}) = 
	f_0 + \sum_{k=1}^\infty 
	\sum_{(i_1,\dotsc,i_k)\in [n]^k}f_{i_1\dotsm i_k} x_{i_1} \dotsm x_{i_k} 
	= \sum_{w \in [n]^* \cup \{\emptyset\}} f_{w} x_{w}
\]
with coefficients $f_0, f_{i_1\dotsm i_k} \in \mathbb K$ and where we have associated to a word $w=i_1 \dotsm i_m \in {[n]}^m$ the non-commutative monomial $x_w \coloneq x_{i_1}\dotsm x_{i_m}$, with the convention that $x_\emptyset =1$ and $f_\emptyset=f_0$. As in the previous section, we will also write $f_{i_1\dotsm i_k}^{(k)}$ for $ f_{i_1\dotsm i_k}$.

An element $f=f({\mathbf x}) \in R_{\mathbf X}$ is called a {\it functional} as it can be viewed alternatively as a function on ${\mathbf X}^\ast\cup\{\mathbf 1\}$. A functional is {\it finite} if it is with finite support when seen as a function on ${\mathbf X}^\ast\cup\{\mathbf 1\}$.

We then set 
$$
	G^1_{\mathbf X}:=\{f({\mathbf x}) \in R_{\mathbf X} \ |\  f_0=1\}, \qquad 
	G^0_{\mathbf X}:=\{f({\mathbf x}) \in R_{\mathbf X} \ |\  f_0=0\},
$$
so that $G^1_{\mathbf X}=1 + G^0_{\mathbf X}$. A family 
$$
	{\mathbf g}=(g_1,\ldots,g_n) \in R_{\mathbf X}^n
$$ 
is called a {\it field} --- we shall encounter naturally such families later on.

The set of so-called ``tangent-to-identity'' elements is the subset of $G^0_{\mathbf X}$ defined by
$$
	G^c_{\mathbf X}:=\{f \in R_{\mathbf X}\ |\  f_0=0,\ f_i=1, \forall i \in [n] \}. 
$$

The usual product of non-commutative formal power series (also known as Cauchy product) is denoted $fg$ for elements $f,g \in R_{\mathbf X}$, with coefficient corresponding to the word $w=i_1\dotsm i_m$ defined as
\[
	(fg)_{i_1\dotsm i_m} 
	:= f_{i_1\dotsm i_m}g_0 + f_0g_{i_1\dotsm i_m} 
	+\sum_{j=1}^{m-1} f_{i_1 \dotsm i_j} g_{i_{j+1} \dotsm i_m}. 
\]
By standard arguments, $G^1_{\mathbf X}$ is a group under this product. 

The composition $f\circ {\mathbf g}$ of a functional $f \in R_{\mathbf X}$ and a field without constant terms, ${\mathbf g}=(g_1,\dots,g_n)\in (G^0_{\mathbf X})^n$, is obtained by substitution of $g_i$ for $x_i$, that is:
$$
	f\circ {\mathbf g}({\mathbf x}) = 
	f_0 + \sum_{k=1}^\infty \sum_{(i_1,\dotsc,i_k)\in [n]^k}f_{i_1\dotsm i_k} 
	g_{i_1}({\mathbf x})  \dotsm g_{i_k}({\mathbf x}) .
$$
If the functional is finite, then the hypothesis that the field must be without constant terms can be dropped. A similar observation applies to composition of fields below. The composition of a field ${\mathbf f}=(f_1,\ldots,f_n)\in R_{\mathbf X}^n$ with a field without constant terms, ${\mathbf g}=(g_1,\ldots,g_n)\in (G^0_{\mathbf X})^n$, is defined term-wise as $(f_1\circ {\mathbf g},\ldots ,f_n\circ {\mathbf g})$.

\begin{definition}
For $f,g \in R_{\mathbf X}$, {{\it shifted composition}} is defined by:
\begin{equation}
\label{monoprod}
	 (f \bullet g)({\mathbf x}) := g({\mathbf x})f({\mathbf x}g({\mathbf x})),
\end{equation}
where 
$$
	f({\mathbf x}g({\mathbf x}))
	:=f_0 + \sum_{k=1}^\infty \sum_{(i_1,\dotsc,i_k)\in [n]^k}f_{i_1\dotsm i_k} x_{i_1}
	g({\mathbf x}) \dotsm x_{i_k}g({\mathbf x}).
$$
\end{definition}

In the case of a single-letter (${\mathbf X}=\{x\}=\{x_1\}$), one has a linear isomorphism $G^1_{\mathbf X} \cong G^c_{\mathbf X}$ given by $\mu: f \longmapsto xf$ and
$$
	\mu(f \bullet g)(x)
	=xg(x)f(xg(x))
	=\mu(f)(\mu(g)(x)).
$$
By this isomorphism, $(G^1_{\mathbf X},\bullet)$ is isomorphic to the group of tangent-to-identity formal diffeomorphisms of the line. 

\begin{proposition}
The new product \eqref{monoprod} defined on $R_{\mathbf X}$ is associative. 
\end{proposition}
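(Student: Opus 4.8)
The plan is to recognise the shifted composition as a disguised form of ordinary substitution, and then to deduce its associativity from the (manifest) associativity of substitution. For $g \in R_{\mathbf X}$ let $\sigma_g \colon R_{\mathbf X} \to R_{\mathbf X}$ denote the substitution $x_i \mapsto x_i g({\mathbf x})$; since each $x_i g({\mathbf x})$ has vanishing constant term, every word of length $k$ is sent to a series of lowest degree at least $k$, so $\sigma_g$ is a well-defined continuous $\mathbb K$-algebra endomorphism and, by construction, $f({\mathbf x} g({\mathbf x})) = \sigma_g(f)$. Thus the definition \eqref{monoprod} reads
\[
 f \bullet g = g \cdot \sigma_g(f),
\]
where $\cdot$ is the Cauchy product. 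First I would record that $\sigma_g$ being an algebra homomorphism is exactly the statement that substitution distributes over products of series, which is immediate from $\sigma_g(x_w) = \prod_j (x_{i_j}\, g)$.

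The key step is the composition rule for these endomorphisms. Because an endomorphism of $R_{\mathbf X}$ is determined by the images of the generators, it suffices to compute $(\sigma_h \circ \sigma_g)(x_i) = \sigma_h(x_i g) = x_i\, h\, \sigma_h(g) = x_i\,(g \bullet h)$, using the multiplicativity of $\sigma_h$ together with $g \bullet h = h\,\sigma_h(g)$. Hence
\[
 \sigma_h \circ \sigma_g = \sigma_{g \bullet h}.
\]
This is the heart of the matter: the shifted product is precisely the operation for which the composite substitution $\sigma_h\circ\sigma_g$ is again a substitution; the non-commutativity forces the factor $\sigma_h(g)$ (rather than $g$) to appear, which is exactly what the definition of $\bullet$ supplies.

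With these two facts the conclusion is a short computation. Expanding the left-hand side,
\[
 (f \bullet g) \bullet h = h \cdot \sigma_h(f \bullet g) = h \cdot \sigma_h\!\big(g \cdot \sigma_g(f)\big) = h\,\sigma_h(g)\cdot(\sigma_h \circ \sigma_g)(f) = (g \bullet h)\cdot \sigma_{g \bullet h}(f),
\]
where I used that $\sigma_h$ is multiplicative, the composition rule, and $g \bullet h = h\,\sigma_h(g)$. The last expression is by definition $f \bullet (g \bullet h)$, which gives associativity.

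Alternatively, and in the spirit of the single-letter remark preceding the statement, I would introduce the field $\Psi(f) := (x_1 f, \ldots, x_n f) \in (G^0_{\mathbf X})^n$, the multivariate analogue of $\mu$. A direct check gives $\Psi(f) \circ \Psi(g) = \Psi(f \bullet g)$, so $\Psi$ intertwines $\bullet$ with composition of fields; since $\Psi$ is injective and composition of fields is associative (it corresponds to composition of substitution endomorphisms), associativity of $\bullet$ follows. The only point requiring genuine care in either route is the non-commutative bookkeeping behind $\sigma_h \circ \sigma_g = \sigma_{g\bullet h}$ --- in particular keeping track of the order of the interleaved factors $x_{i_j}$ and $g$, and not conflating $\sigma_h\circ\sigma_g$ with $\sigma_g\circ\sigma_h$ (and hence $g\bullet h$ with $h\bullet g$); everything else is routine.
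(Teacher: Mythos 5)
Your proof is correct, and it takes a more structural route than the paper's. The paper proves associativity by a direct three-line expansion: it writes out $((f\bullet g)\bullet h)({\mathbf x}) = h({\mathbf x})(f\bullet g)({\mathbf x}h({\mathbf x}))$, substitutes ${\mathbf x}h({\mathbf x})$ into the product $g({\mathbf x})f({\mathbf x}g({\mathbf x}))$, and recognises the result as $(f\bullet(g\bullet h))({\mathbf x})$ --- implicitly using, without comment, exactly the two facts you isolate: that substitution distributes over the Cauchy product, and that iterated substitutions compose. By promoting these to explicit lemmas ($\sigma_g$ is a continuous algebra endomorphism, and $\sigma_h\circ\sigma_g=\sigma_{g\bullet h}$), you turn the verification into a formal consequence and, more importantly, you explain \emph{why} the product is associative: $g\mapsto\sigma_g$ is an anti-homomorphism from $(R_{\mathbf X},\bullet)$ to substitution endomorphisms, twisted by the prefactor $g$. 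Your alternative argument via $\Psi(f)=(x_1f,\ldots,x_nf)$ is precisely the multivariate generalisation of the single-letter remark the paper makes just before the proposition (the isomorphism $\mu:f\mapsto xf$ intertwining $\bullet$ with composition of tangent-to-identity diffeomorphisms); the paper states that remark only for one variable and then proves the general case by brute computation, so your version actually upgrades their remark into a proof. What the paper's approach buys is brevity and self-containment; what yours buys is conceptual clarity and a clean explanation of the later group structure on $G^1_{\mathbf X}$. One point of care you handled correctly but that deserves emphasis: agreeing on generators only determines a \emph{continuous} endomorphism of $R_{\mathbf X}$, so the continuity observation you make at the outset (each $x_ig({\mathbf x})$ has vanishing constant term) is genuinely needed for the identity $\sigma_h\circ\sigma_g=\sigma_{g\bullet h}$ to follow from its verification on the $x_i$.
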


\begin{proof}
Indeed, a quick computation shows that 
\begin{align*}
  ((f \bullet g)  \bullet h)({\mathbf x}) 
  &=  h({\mathbf x}) (f\bullet g)({\mathbf x}{h({\mathbf x})})\\
  &= h({\mathbf x})g({\mathbf x}h({\mathbf x})) f({\mathbf x}{h({\mathbf x})}g({\mathbf x}h({\mathbf x})))\\
  &= (f \bullet (g  \bullet h))({\mathbf x}).
\end{align*}
\end{proof}
 
In particular (see \cite{EPTZ}), we have the following

\begin{proposition}
$G^\bullet:=(G^1_{\mathbf X},\bullet)$ is a non-commutative group with unit~$1$. 
\end{proposition}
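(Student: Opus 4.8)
The plan is to check the remaining group axioms---closure, identity, and existence of inverses---since associativity is already furnished by the preceding proposition, and then to settle the qualifier ``non-commutative'' by exhibiting two non-commuting elements. Throughout I would work with the grading of $R_{\mathbf X}$ by $\mathbf{x}$-degree and write $f = 1 + f_{(1)} + f_{(2)} + \cdots$ for $f \in G^1_{\mathbf X}$, where $f_{(k)}$ denotes the homogeneous component of degree $k$.

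Closure and the identity are immediate from the behaviour of the constant term. If $f,g\in G^1_{\mathbf X}$, then since $f(\mathbf{x}g(\mathbf{x}))$ substitutes $x_i g(\mathbf{x})$ for each letter $x_i$, every monomial of positive degree in $f$ contributes only terms of positive degree; hence $f(\mathbf{x}g(\mathbf{x}))$ has constant term $f_0 = 1$, and $(f\bullet g)_0 = g_0\, f_0 = 1$, so $f\bullet g\in G^1_{\mathbf X}$. For the unit, $f\bullet 1 = 1\cdot f(\mathbf{x}\cdot 1) = f$, while $1\bullet f = f(\mathbf{x})\cdot 1(\mathbf{x}f(\mathbf{x})) = f$ because the constant series $1$ evaluates to $1$ on any argument; thus $1$ is a two-sided identity.

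The heart of the proof is the construction of inverses, which I would carry out recursively on the degree. Looking for $h = 1 + h_{(1)} + h_{(2)} + \cdots$ with $f\bullet h = 1$, the key structural observation is that $\bullet$ is triangular (unipotent) with respect to the grading: extracting the degree-$k$ part of $(f\bullet h)(\mathbf{x}) = h(\mathbf{x})\,f(\mathbf{x}h(\mathbf{x}))$, the top components can be reached only through the factor $h_{(k)}$ (paired with $f_0 = h_0 = 1$) or through $f_{(k)}$ (from the length-$k$ monomials of $f$ with each substituted $h$ replaced by $h_0 = 1$), all other contributions involving $f$ together with $h_{(1)},\dots,h_{(k-1)}$ only. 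Hence
\[
  (f\bullet h)_{(k)} = h_{(k)} + f_{(k)} + P_k\bigl(f;\, h_{(1)},\dots,h_{(k-1)}\bigr)
\]
for an explicit polynomial expression $P_k$. Imposing $(f\bullet h)_{(k)}=0$ for all $k\ge 1$ then determines $h_{(k)} = -f_{(k)} - P_k(\,\cdots)$ uniquely by induction on $k$, yielding a right inverse $h\in G^1_{\mathbf X}$. I expect the verification of this triangular identity---tracking exactly how each degree-$k$ coefficient of the shifted composition depends on $h$---to be the main obstacle; the rest is routine.

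To upgrade the right inverse to a two-sided one I would invoke the elementary monoid lemma: in an associative monoid with identity in which every element has a right inverse, every element is invertible. Indeed, if $f\bullet h = 1$ and in turn $h\bullet s = 1$, then $s = 1\bullet s = (f\bullet h)\bullet s = f\bullet(h\bullet s) = f\bullet 1 = f$, whence $h\bullet f = h\bullet s = 1$, so $h$ is also a left inverse. Thus $G^\bullet$ is a group. Finally, non-commutativity is visible already in the single-letter case via the isomorphism $(G^1_{\mathbf X},\bullet)\cong$ (tangent-to-identity formal diffeomorphisms of the line) recorded above: for instance $f = 1+x$ and $g = 1+x^2$ correspond to $x\mapsto x+x^2$ and $x\mapsto x+x^3$, whose compositions disagree already at order $x^4$, so $f\bullet g \neq g\bullet f$.
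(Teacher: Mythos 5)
Your proof is correct, but it takes a genuinely different (and more self-contained) route than the paper: after establishing associativity by direct computation, the paper does not prove this proposition at all --- it simply records it with a pointer to \cite{EPTZ}, and places it conceptually in the framework of left-linear groups \cite[Sect.~6.4]{CP}, where invertibility of such unipotent, coefficient-wise-polynomial products is part of the general theory. Your argument supplies exactly the content that is being outsourced. The constant-term computations for closure and for the two-sided unit are right; the key triangularity claim $(f\bullet h)_{(k)} = h_{(k)} + f_{(k)} + P_k\bigl(f;\,h_{(1)},\dots,h_{(k-1)}\bigr)$ is correct, since the degree-$k$ part of $f(\mathbf{x}h(\mathbf{x}))$ equals $f_{(k)}$ (taking the constant term $1$ of $h$ at every substitution slot) plus terms pairing $f_{(m)}$, $m<k$, with components $h_{(j)}$, $j\le k-1$, while the Cauchy factor $h(\mathbf{x})$ contributes $h_{(k)}$ against the constant term of $f(\mathbf{x}h(\mathbf{x}))$; hence the recursion $h_{(k)} := -f_{(k)} - P_k$ produces a right inverse, and your monoid lemma correctly upgrades it to a two-sided one. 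The non-commutativity witness also checks out: $\mu(f)=x+x^2$ and $\mu(g)=x+x^3$ compose to $x+x^2+x^3+2x^4+x^6$ in one order and $x+x^2+x^3+3x^4+3x^5+x^6$ in the other. One small point worth making explicit: for a general alphabet, series involving only the letter $x_1$ are stable under $\bullet$, so the single-letter counterexample embeds in $G^1_{\mathbf X}$ for every $n\ge 1$; with that remark your proof is complete. What the paper's citation buys is brevity and conceptual placement (left-linear groups, pre-Lie tangent structure); what yours buys is a proof readable without outside references.
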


Following our previous observation on the single-letter case, the group $G^\bullet$ can be thought of as a multidimensional generalisation of the group of tangent-to-identity formal diffeomorphisms. It is a left-linear group \cite[Sect.~6.4]{CP}, that is, the coefficients of the monomials $x_w$ in the formal power series expansion of the product $(f \bullet g)({\mathbf x})$ are polynomials in the coefficients of $f$ and $g$ that are linear in the former and multilinear in the  latter. 

It follows \cite[Chap.~6]{CP} that the tangent space $\mathfrak g=G^0$ at $1$ to $G^\bullet$ is a right pre-Lie algebra, that is, it is equipped with a bilinear product 
$$
	\triangleleft : \mathfrak g \times \mathfrak g \to \mathfrak g
$$
such that 
$$
	a\triangleleft(b\triangleleft c)-(a\triangleleft b)\triangleleft c
		=a\triangleleft(c\triangleleft b)-(a\triangleleft c)\triangleleft b.
$$
This relation implies that the bilinear product 
$$
	[a,b]_{\scriptscriptstyle{\triangleleft}} :=a\triangleleft b-b\triangleleft a
$$ 
is a Lie bracket.

The pre-Lie product $\triangleleft$ can be described explicitly by:
$$	
	x_{i_1}\cdots x_{i_n}\triangleleft x_{j_1}\cdots x_{j_m} 
	=\sum\limits_{k=0}^n x_{i_1}\cdots x_{i_k} x_{j_1}\cdots x_{j_m}x_{i_{k+1}}\cdots x_{i_n}.
$$
It plays an important role in non-commutative probability theory. For example, it allows to understand algebraically the relations between free, monotone and Boolean cumulants. See \cite{EPTZ} and references therein for details.

In the single-letter case, one finds 
$$
	x^n\triangleleft x^m=(n+1)x^{n+m},
$$
so that we get for the associated Lie bracket
$$
	[x^n,x^m]_{\scriptscriptstyle{\triangleleft}} =(n-m)x^{n+m}.
$$
This is, up to sign and isomorphism, the Lie algebra of polynomial vector fields $x^{n+1}\partial_x$.

\begin{definition}
For $f\in G^0_{\mathbf X}$, $g\in G^1_{\mathbf X}$, the product $f \bullet g $ is in $ G^0_{\mathbf X}$ and we set:
\begin{align}
\label{leftprod}
	 (f \prec g)({\mathbf x})
	 &:= f({\mathbf x}g({\mathbf x})),\\ 
\label{rightprod}
	 (f \succ g)({\mathbf x})
	 &:= (g({\mathbf x})-1)f({\mathbf x}g({\mathbf x})),
\end{align}
where $f\prec g$ and $f\succ g$ are in $G^0_{\mathbf X}$. Furthermore
$$
	f\bullet g=f\prec g+f\succ g.
$$
\end{definition}

\begin{example}
The identity $\mathrm{M}({\mathbf x})=1+\mathrm{K}({\mathbf x}\mathrm{M}({\mathbf x}))$ defining free cumulants in terms of moments can be rewritten:
$$
	\mathrm{M}({\mathbf x})=1+(\mathrm{K}\prec\mathrm{M})({\mathbf x}).
$$
\end{example}

\begin{lemma}
The following identity holds:
$$
	(f\prec g)\prec h=f\prec (g\bullet h).
$$
In other terms, the $\prec$-product defines a right module structure on $G^0_{\mathbf X}$ over the group $G^1_{\mathbf X}$.
\end{lemma}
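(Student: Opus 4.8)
The plan is to recognise $\prec$ as a special case of the composition of a functional with a field, and then to deduce the identity from the associativity of substitution. For $g\in G^1_{\mathbf X}$, introduce the field $\Phi_g:=(x_1 g({\mathbf x}),\dots,x_n g({\mathbf x}))$; since $g_0=1$, each component has vanishing constant term, so $\Phi_g\in(G^0_{\mathbf X})^n$ and composition with $\Phi_g$ is well defined on formal power series. Comparing with \eqref{leftprod}, substituting $x_i g({\mathbf x})$ for $x_i$ in $f$ is precisely $f\circ\Phi_g$, so that $f\prec g=f\circ\Phi_g$ for all $f\in G^0_{\mathbf X}$.

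With this reformulation, the left-hand side becomes $(f\prec g)\prec h=(f\circ\Phi_g)\circ\Phi_h$. Invoking the associativity of functional--field composition --- both sides being the result of substituting one and the same field into $f$ --- this equals $f\circ(\Phi_g\circ\Phi_h)$. It therefore remains to establish the field identity $\Phi_g\circ\Phi_h=\Phi_{g\bullet h}$, after which we conclude $(f\prec g)\prec h=f\circ\Phi_{g\bullet h}=f\prec(g\bullet h)$.

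The crux --- and the step demanding the most care --- is this field identity. I would compute the $i$-th component of $\Phi_g\circ\Phi_h$ directly. Composition with $\Phi_h$ replaces \emph{every} occurrence of a letter $x_j$ by $x_j h({\mathbf x})$, including those hidden inside $g({\mathbf x})$; hence $x_i g({\mathbf x})$ is sent to $x_i h({\mathbf x})\,g({\mathbf x}h({\mathbf x}))$. On the other hand, \eqref{monoprod} gives $(g\bullet h)({\mathbf x})=h({\mathbf x})\,g({\mathbf x}h({\mathbf x}))$, so the $i$-th component of $\Phi_{g\bullet h}$ is $x_i\,(g\bullet h)({\mathbf x})=x_i h({\mathbf x})\,g({\mathbf x}h({\mathbf x}))$. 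The two agree component-wise, which is the desired identity; the essential subtlety, and the reason $g\bullet h$ rather than a naive product appears, is exactly that the outer substitution must penetrate the inner $g({\mathbf x})$, turning it into $g({\mathbf x}h({\mathbf x}))$. Finally, the right-module structure announced in the statement follows from this associativity together with the trivial unit action $f\prec 1=f({\mathbf x})=f$ (equivalently, $\Phi_1=(x_1,\dots,x_n)$ is the identity field), the group $(G^1_{\mathbf X},\bullet)$ having been established above.
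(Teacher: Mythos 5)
Your proof is correct, but it takes a more structural route than the paper, whose own proof is a bare two-line unfolding of the definitions: $((f\prec g)\prec h)({\mathbf x})=(f\prec g)({\mathbf x}h({\mathbf x}))=f({\mathbf x}h({\mathbf x})g({\mathbf x}h({\mathbf x})))=(f\prec(g\bullet h))({\mathbf x})$, where the middle equality is exactly what you isolate as the crux (the outer substitution penetrating the inner $g$). You instead write $f\prec g=f\circ\int g$ --- an identification the paper itself makes right after \eqref{intfield} --- and then split the argument into (i) associativity of functional--field composition and (ii) the field identity $\int(g\bullet h)=(\int g)\circ(\int h)$. Point (ii) is the multivariate generalisation of the paper's single-letter remark $\mu(f\bullet g)=\mu(f)\circ\mu(g)$, so your route makes explicit the structural reason behind the lemma: $g\mapsto\int g$ embeds $(G^1_{\mathbf X},\bullet)$ into the monoid of constant-term-free fields under composition, and the $\prec$-action is pullback along this embedding. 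What this buys is conceptual clarity and reusability; what it costs is that you must invoke associativity of substitution, $(f\circ{\mathbf g})\circ{\mathbf h}=f\circ({\mathbf g}\circ{\mathbf h})$, a standard fact that the paper never states and whose careful verification (composition with a constant-term-free field is an algebra morphism compatible with the relevant limits) amounts to essentially the same definitional computation that the paper performs directly --- so your argument is not shorter, but rather a repackaging that exposes the underlying homomorphism. Your closing observation that $f\prec 1=f$ is a correct and necessary supplement for the module-structure claim, which the paper leaves implicit.
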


\begin{proof}
Indeed, we check that
\begin{align*}
	((f\prec g)\prec h)({\mathbf x})
	&=(f\prec g)({\mathbf x}h({\mathbf x}))\\
	&=f({\mathbf x}h({\mathbf x})g({\mathbf x}h({\mathbf x})))\\
	&=(f\prec(g\bullet h))({\mathbf x}).
\end{align*}
\end{proof}

Finally, we have distributivity relations relating the Cauchy product with the other products: 
\begin{lemma}
for $f,g\in G^0_{\mathbf X}$ and $h\in G^1_{\mathbf X}$
\begin{align*}
	(fg)\bullet h
	&=(f\bullet h)(g\prec h)\\
	(fg)\prec h
	&=(f\prec h)(g\prec h)\\
	(fg)\succ h
	&=(f\succ h)(g\prec h).
\end{align*}
\end{lemma}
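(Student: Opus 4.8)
The plan is to reduce all three identities to a single structural fact about the substitution operator underlying the $\prec$-product. For $h\in G^1_{\mathbf X}$, I would introduce the map $\sigma_h\colon R_{\mathbf X}\to R_{\mathbf X}$ defined by $\sigma_h(F)({\mathbf x}):=F({\mathbf x}h({\mathbf x}))$, i.e.\ $\sigma_h$ substitutes $x_ih({\mathbf x})$ for each letter $x_i$. Since $h_0=1$, each $x_ih({\mathbf x})$ has vanishing constant term and lowest-degree term $x_i$, so $\sigma_h$ is well defined on all of $R_{\mathbf X}$ (each coefficient of $\sigma_h(F)$ is a finite sum). The first thing I would prove is that $\sigma_h$ is an algebra endomorphism for the Cauchy product, namely $\sigma_h(FG)=\sigma_h(F)\sigma_h(G)$.

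This multiplicativity is the heart of the matter, and it is checked at the level of monomials. For words $w=i_1\cdots i_k$ and $w'=j_1\cdots j_l$ one has $x_wx_{w'}=x_{ww'}$ and
$$
\sigma_h(x_{ww'})=x_{i_1}h\cdots x_{i_k}h\,x_{j_1}h\cdots x_{j_l}h=\sigma_h(x_w)\,\sigma_h(x_{w'}),
$$
because the substitution acts letter-by-letter and respects concatenation. Extending bilinearly yields $\sigma_h(FG)=\sigma_h(F)\sigma_h(G)$ for all $F,G\in R_{\mathbf X}$.

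With this in hand the three identities are immediate. By definition $(fg)\prec h=\sigma_h(fg)=\sigma_h(f)\sigma_h(g)=(f\prec h)(g\prec h)$, which is the second relation. Multiplying on the left by $(h({\mathbf x})-1)$ and using $F\succ h=(h-1)\sigma_h(F)$ together with $g\prec h=\sigma_h(g)$ gives
$$
(fg)\succ h=(h-1)\sigma_h(f)\sigma_h(g)=(f\succ h)(g\prec h),
$$
the third relation. For the first, I would either multiply on the left by $h({\mathbf x})$ in the same way, using $F\bullet h=h\,\sigma_h(F)$, or simply add the $\prec$- and $\succ$-relations and invoke $F\bullet h=F\prec h+F\succ h$:
$$
(fg)\bullet h=(f\prec h)(g\prec h)+(f\succ h)(g\prec h)=(f\bullet h)(g\prec h).
$$

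I expect the only genuine step to be the multiplicativity of $\sigma_h$; everything else is formal bookkeeping, together with the trivial observation that $fg\in G^0_{\mathbf X}$ whenever $f,g\in G^0_{\mathbf X}$. The subtle point to get right is that the substitution inserts the \emph{same} series $h$ after \emph{every} letter, so that no spurious interaction between the two factors is created when forming $x_wx_{w'}$---this is precisely what makes $\sigma_h$ a homomorphism rather than merely additive, and it is where the placement of $h$ on the right of each letter in the shifted composition is used.
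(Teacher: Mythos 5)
Your proof is correct and takes essentially the same route as the paper's: the paper's key (unstated) step $(fg)({\mathbf x}h({\mathbf x}))=f({\mathbf x}h({\mathbf x}))g({\mathbf x}h({\mathbf x}))$ is exactly the multiplicativity of your $\sigma_h$, which you make explicit by checking it on monomials, while the paper treats it as immediate. The remaining derivation of the three identities from this substitution-homomorphism property is the same formal bookkeeping in both arguments.
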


\begin{proof}
Indeed, we verify first
\begin{align*}
	((fg)\bullet h)({\mathbf x})
	&=h({\mathbf x})(fg)({\mathbf x}h({\mathbf x}))\\
	&=h({\mathbf x})f({\mathbf x}h({\mathbf x}))g({\mathbf x}h({\mathbf x}))\\
	&=(f\bullet h)(g\prec h)({\mathbf x}),
\end{align*}
and then
\begin{align*}
	((fg)\prec h)({\mathbf x})
	&=(fg)({\mathbf x}h({\mathbf x}))\\
	&=f({\mathbf x}h({\mathbf x}))g({\mathbf x}h({\mathbf x}))\\
	&= (f\prec h)(g\prec h)({\mathbf x}),
\end{align*}
and ultimately
\begin{align*}
	((fg)\succ h)({\mathbf x})
	&=(h({\mathbf x})-1)(fg)({\mathbf x}h({\mathbf x}))\\
	&=(h({\mathbf x})-1)f({\mathbf x}h({\mathbf x}))g({\mathbf x}h({\mathbf x}))\\
	&=(f\succ h)(g\prec h)({\mathbf x}).
\end{align*}
\end{proof}

%%%%%%%%%%%%%%%%%%
%%%%%%%%%%%%%%%%%%

\section{Planar functional calculus}
\label{sect:planarCalc}	

We recall here briefly the rules of planar (that is, non-commutative) functional calculus following Cvitanovic et al.~\cite{cvitanovic1,cvitanovic2}. Consider $f \in R_{\mathbf X}$,
$$
	f({\mathbf x})
	= f_0 + \sum_{k>0 \atop 0<i_1,\ldots,i_k\leq n} 
	f^{(k)}_{{i_1} \cdots {i_k}} x_{i_1} \cdots x_{i_k} ,
$$
so that
$$
	f^{(k)}_{{i_1} \cdots {i_k}} 	
	= \left. \frac{\partial^k}{\partial{x_{i_k}} \cdots \partial{x_{i_1}}}\right |_{{\mathbf x}=0}
	f({\mathbf x}),
$$
where the rules stated for functional derivations are
$$
	\frac{\partial}{\partial{x_j}} (u x_{i_1} \cdots x_{i_k} ) 
	:= u \delta_{ji_1} x_{i_2} \cdots x_{i_k}, 
$$ 
for $u \in \mathbb{K}$ a scalar --- in particular $\frac{\partial}{\partial{x_i}} u = 0$. Notice that Schwarz' rules of calculus do not apply, i.e., we have non-commutativity of derivations, $\frac{\partial^2}{\partial{x_1}\partial{x_2}}\not=\frac{\partial^2}{\partial{x_2}\partial{x_1}}$. All formulas in this section are direct consequences of these rules and we omit their proofs.

We introduce now the following notation:
$$
	f({\mathbf x})_{x_i}:=\frac{\partial}{\partial{x_i}}f({\mathbf x}),
	\qquad 
	f(0)_{x_i}:=f_i^{(1)},
$$
and call   
\begin{equation}
\label{difffield}
	\partial f:=(f({\mathbf x})_{x_1},\dots,f({\mathbf x})_{x_n})
\end{equation}
the {\it differential field} of $f$. More generally, we set:
$$
	f({\mathbf x})_{x_{i_1},\ldots,x_{i_k}}
	:=\frac{\partial^k}{\partial{x_{i_k}} \cdots \partial{x_{i_1}}}f({\mathbf x}),
	\qquad
	f(0)_{x_{i_1},\ldots,x_{i_k}}
	:=f_{{i_1} \cdots {i_k}}^{(k)}.
$$

The key ingredients in the planar functional calculus developed in \cite{cvitanovic1,cvitanovic2} are the planar Leibniz and chain rules. Let $f({\mathbf x}), g({\mathbf x}) \in R_{\mathbf X}$. Then the functional derivative of the Cauchy product $fg$ is given by the planar Leibniz rule
\begin{equation}
\label{Leibniz}
	\frac{\partial}{\partial{x_i}} (f({\mathbf x})g({\mathbf x})) 
	= \frac{\partial}{\partial{x_i}} (f({\mathbf x}))g({\mathbf x}) 
		+ f(0)\frac{\partial}{\partial{x_i}} g({\mathbf x}).  
\end{equation}
Here $f(0) = f_0 \in \mathbb{K}$. 

Now assume that $f ({\mathbf x}) \in R_{\mathbf X}$ is a functional and $\mathbf{g}:=(g_1,\ldots,g_n)\in (G^0_{\mathbf X})^n$ is a field without constant terms, we then get for the composition $f\circ\mathbf{g}$ the chain rule:
\begin{equation}
\label{chain}
	\frac{\partial}{\partial{x_i}}(f\circ \mathbf{g}) 
	= \frac{\partial g_m({\mathbf x}) }{\partial{x_i}} \cdot
								\frac{\partial}{\partial{g_m}}(f\circ \mathbf{g}).
\end{equation}
Recall that Einstein's summation convention is in place implying a sum over $m$ on the righthand side of \eqref{chain}. The last partial derivative is formally obtained from the expansion
$$
	f\circ \mathbf{g}
	=f_0 + \sum\limits_{k>0  \atop 0<i_1,\ldots,i_k\leq n }
	f^{(k)}_{i_1 \cdots i_k}g_{i_1}\cdots g_{i_k}.
$$

Dualising the construction of the differential field \eqref{difffield}, we define the {\it integral field} of $f\in R_{\mathbf X}$ by
\begin{equation}
\label{intfield}
	\int f:={\mathbf x}f({\mathbf x})=(x_1f({\mathbf x}),\ldots, x_nf({\mathbf x})).
\end{equation}
We get in particular that $f \circ \int g$ identifies with $f({\mathbf x}g({\mathbf x}))$ as previously defined. From the planar Leibniz and chain rules, we obtain the following rules for planar algebraic calculus:

\begin{lemma}
For $f\in G^0_{\mathbf X}$ and $g\in G^1_{\mathbf X}$,
\begin{align}
	\frac{\partial}{\partial{x_i}} (f\bullet g)({\mathbf x})
	&=\frac{\partial}{\partial{x_i}}g({\mathbf x})\cdot (f\prec g)({\mathbf x})
			+ (\frac{\partial}{\partial{x_i}}f)\bullet g({\mathbf x}),\label{chain1}\\
	\frac{\partial}{\partial{x_i}} (f\prec g)({\mathbf x})
	&=(\frac{\partial}{\partial{x_i}}f)\bullet g({\mathbf x}),\label{chain2}\\
	\frac{\partial}{\partial{x_i}} (f\succ g)({\mathbf x})
	&=\frac{\partial}{\partial{x_i}}g({\mathbf x})\cdot (f\prec g)({\mathbf x}). \label{chain3}
\end{align}
\end{lemma}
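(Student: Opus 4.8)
The plan is to establish the three identities in the order \eqref{chain2}, \eqref{chain1}, \eqref{chain3}, treating \eqref{chain2} as the crux and deducing the other two from it by elementary manipulations. The only genuine computation is the one for $\prec$, which is a direct application of the planar chain rule \eqref{chain} to the composition $f\prec g=f\circ\int g$; the identities for $\bullet$ and $\succ$ then follow from the planar Leibniz rule \eqref{Leibniz} together with the decomposition $f\bullet g=f\prec g+f\succ g$.

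First I would prove \eqref{chain2}. By definition $(f\prec g)({\mathbf x})=f({\mathbf x}g({\mathbf x}))$, which by the identification noted above equals $f\circ\int g$, where $\int g=(x_1g({\mathbf x}),\dots,x_ng({\mathbf x}))$ is the integral field, with $m$-th component $(\int g)_m=x_mg({\mathbf x})$. Applying \eqref{chain} with this field gives $\frac{\partial}{\partial x_i}(f\circ\int g)=\frac{\partial(\int g)_m}{\partial x_i}\cdot\frac{\partial}{\partial(\int g)_m}(f\circ\int g)$. The first factor is computed from the derivation rule, which strips the leading letter: $\frac{\partial}{\partial x_i}(x_mg({\mathbf x}))=\delta_{im}\,g({\mathbf x})$, so that after summing over $m$ only the term $m=i$ survives and $g({\mathbf x})$ is placed on the left. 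The second factor, read off from the expansion $f\circ\int g=f_0+\sum f^{(k)}_{i_1\cdots i_k}(\int g)_{i_1}\cdots(\int g)_{i_k}$, strips the leading factor $(\int g)_i$ and is therefore exactly $(\frac{\partial}{\partial x_i}f)\circ\int g=(\frac{\partial}{\partial x_i}f)\prec g$. Combining, $\frac{\partial}{\partial x_i}(f\prec g)=g({\mathbf x})\cdot((\frac{\partial}{\partial x_i}f)\prec g)({\mathbf x})$, which is precisely $(\frac{\partial}{\partial x_i}f)\bullet g$ by the definition $(u\bullet v)({\mathbf x})=v({\mathbf x})u({\mathbf x}v({\mathbf x}))$. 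This yields \eqref{chain2}.

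Next, for \eqref{chain1} I would use that $(f\bullet g)({\mathbf x})=g({\mathbf x})\,(f\prec g)({\mathbf x})$ is a Cauchy product and apply the planar Leibniz rule \eqref{Leibniz}, noting $g(0)=g_0=1$ since $g\in G^1_{\mathbf X}$. This gives $\frac{\partial}{\partial x_i}(f\bullet g)=(\frac{\partial}{\partial x_i}g)\,(f\prec g)+\frac{\partial}{\partial x_i}(f\prec g)$, and substituting \eqref{chain2} into the last term produces exactly the right-hand side of \eqref{chain1}. Finally, \eqref{chain3} follows at once by linearity of $\frac{\partial}{\partial x_i}$ from $f\succ g=f\bullet g-f\prec g$: subtracting \eqref{chain2} from \eqref{chain1}, the two copies of $(\frac{\partial}{\partial x_i}f)\bullet g$ cancel, leaving $\frac{\partial}{\partial x_i}(f\succ g)=(\frac{\partial}{\partial x_i}g)\cdot(f\prec g)$.

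The one point demanding care --- and the step I expect to be the main obstacle --- is tracking the non-commutative ordering in \eqref{chain2}. Because Schwarz' rule fails and the derivation acts only on the leading letter, the inner derivative $\frac{\partial(\int g)_m}{\partial x_i}$ must be kept on the left in the chain rule \eqref{chain}; it is precisely this left placement of $g({\mathbf x})$ that matches the asymmetric definition $(u\bullet v)({\mathbf x})=v({\mathbf x})u({\mathbf x}v({\mathbf x}))$ and makes the $\bullet$-product appear rather than a Cauchy product in the wrong order. Once this bookkeeping is done correctly, everything else is routine.
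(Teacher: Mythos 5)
Your proof is correct and is precisely the argument the paper intends: the paper omits the proof, remarking only that all formulas in that section are direct consequences of the planar Leibniz rule \eqref{Leibniz} and chain rule \eqref{chain}, and your derivation — proving \eqref{chain2} via the chain rule applied to $f\circ\int g$, then getting \eqref{chain1} from the Leibniz rule with $g(0)=1$, and \eqref{chain3} by subtraction — fills in exactly those details, with the non-commutative ordering handled correctly.
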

These rules are interesting in that they show that introducing functional derivatives implies the joint consideration of the three planar products, $\bullet, \prec,\succ$, although according our previous work each is naturally associated to one of the three fundamental non-commutative probability theories (respectively monotone, free and Boolean) \cite{EPTZ}. 

Notice also that the rules could be easily adapted to the more general case $f,g\in R_{\mathbf X}$.

\begin{corollary}\label{cordergenf}
Since $\mathrm{M}({\mathbf x})=1+(\mathrm{K}\prec \mathrm{M})({\mathbf x})$, we get for the moment generating function of a family of non-commutative random variables:
$$
	\frac{\partial}{\partial x_i}\mathrm{M}({\mathbf x})
		=(\frac{\partial }{\partial x_i}\mathrm{K})\bullet \mathrm{M}({\mathbf x}).
$$
\end{corollary}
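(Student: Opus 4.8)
The plan is to differentiate the defining functional equation $\mathrm{M}({\mathbf x}) = 1 + (\mathrm{K}\prec\mathrm{M})({\mathbf x})$ term by term and then to invoke the chain rule \eqref{chain2} established in the preceding lemma. First I would apply $\frac{\partial}{\partial x_i}$ to both sides. Because the constant $1$ is a scalar, the basic rule $\frac{\partial}{\partial x_i} u = 0$ for $u \in \mathbb{K}$ annihilates it, leaving $\frac{\partial}{\partial x_i}\mathrm{M}({\mathbf x}) = \frac{\partial}{\partial x_i}(\mathrm{K}\prec\mathrm{M})({\mathbf x})$. This reduces the corollary to computing the derivative of a single $\prec$-product.

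The second step is to check that the hypotheses of \eqref{chain2} are satisfied, so that the lemma applies with $f=\mathrm{K}$ and $g=\mathrm{M}$. Indeed, the free cumulant generating series $\mathrm{K}$ has vanishing constant term, since its expansion \eqref{eq:cumulants} starts at $p>0$, whence $\mathrm{K}\in G^0_{\mathbf X}$; and the moment generating series satisfies $\mathrm{m}^{(0)}=1$, whence $\mathrm{M}\in G^1_{\mathbf X}$. With these memberships in place, \eqref{chain2} reads $\frac{\partial}{\partial x_i}(\mathrm{K}\prec\mathrm{M})({\mathbf x}) = (\frac{\partial}{\partial x_i}\mathrm{K})\bullet\mathrm{M}({\mathbf x})$, and chaining this with the outcome of the first step produces the asserted identity.

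There is essentially no genuine obstacle here, as the statement is an immediate specialisation of the chain rule \eqref{chain2}; the only point that warrants care is the bookkeeping of the membership classes, namely confirming that $\mathrm{K}$ and $\mathrm{M}$ lie respectively in the domains $G^0_{\mathbf X}$ and $G^1_{\mathbf X}$ for which that rule was proved. Should one prefer to bypass even this verification, the remark following the lemma --- that the calculus rules extend readily to arbitrary $f,g\in R_{\mathbf X}$ --- renders the argument unconditional and the corollary a direct substitution.
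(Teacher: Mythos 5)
Your proposal is correct and coincides with the paper's (implicit) argument: the corollary is obtained by differentiating the defining relation $\mathrm{M}=1+(\mathrm{K}\prec\mathrm{M})$ and invoking rule \eqref{chain2} with $f=\mathrm{K}\in G^0_{\mathbf X}$ and $g=\mathrm{M}\in G^1_{\mathbf X}$, exactly as you do. Your explicit verification of the membership hypotheses is a careful touch the paper leaves tacit, but it is the same proof.
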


%%%%%%%%%%%%%%%%%%
%%%%%%%%%%%%%%%%%%

\section{Change of variables and effective action}
\label{sec:changevar}

In this section we introduce the effective action in free probability through suitable changes of variables. This point of view is prompted by the very idea of non-commutative Legendre transform. We should emphasise that we were inspired by the works \cite{cvitanovic1,cvitanovic2} on planar QFT, in particular on planar one-particle irreducibility. Our aim is to extract a mathematically transparent and precise formulation, and to adapt it to the  setting of free probability.

\smallskip

We define now a formal change of variables  in terms of the integral field of a given $g \in G^1_{\mathbf X}$, by setting
\begin{equation}
\label{newvariables}
	{\mathbf Y}^g=(y^g_1,y^g_2,\ldots ,y^g_n):=\int g,
\end{equation}
where the definition of integral field \eqref{intfield} together with $g_0=1$ imply that
\begin{equation}
\label{ztransfrom1}
	y^g_i \coloneq x_ig(x) 
	=  \sum_{w \in [n]^* \cup \{ \emptyset \} } g_{w} x_{iw} 
	= x_i + \sum_{w \in [n]^*} g_{w} x_ix_w.
\end{equation}
For words $w=i_1 \cdots i_l \in [n]^*$, we set 
\begin{equation}
\label{ztransfrom2}
	y^g_{i_1 \cdots i_l} \coloneq y_{i_1}^g \cdots y_{i_l}^g.
\end{equation}

As the new variable $y_i^g$ is equal to $x_i$ plus quadratic and higher order terms in the letters $x_j$'s, any series $f({\mathbf x})$ in $R_{\mathbf X}$ can be rewritten uniquely as a formal power series in ${\mathbf Y}^g$, denoted $f^g=f^g({\mathbf y^g})$. In particular, one can write $x_i=z_i^g({\mathbf y}^g)$, where the righthand side of the equation denotes the expansion of $x_i$ as a formal power series in the $y_j^g$. 

In the sequel, it will often happen that the series $g({\mathbf x})$ will be fixed to be the moment series $\mathrm{M}({\mathbf x})$ defined in \eqref{eq:moments}, the generating series of moments associated to a family of elements in an algebra of non-commutative random variables. We will then simply write abusively ${\mathbf Y}$ for ${\mathbf Y}^g$ and $\mathbf y$ for $\mathbf y^g$.

%$f^g({\mathbf y})$ for $f^g({\mathbf y^g})$,  $f({\mathbf y})$ for $f({\mathbf y^g})$, and $f({\mathbf z})$ for $f({\mathbf z^g})$ (the series $f({\mathbf x})$ rewritten as a series in the variables $\mathbf y^g$, and equal as such to $f^g({\mathbf y^g})$).

Using this convention, the identity \eqref{eq:speicher} relating moments, $\mathrm{M}$, and free cumulants, $\mathrm{K}$, rewrites
$$	
	\mathrm{M}({\mathbf x})=1+\mathrm{K}({\mathbf y}),
$$
or, expressing the free cumulants in terms of moments,
$$
	\mathrm{K}({\mathbf y})=\mathrm{M}({\mathbf x})-1.
$$

%%%%%%%%%%%%%%%%%%

\subsection{Conjugate field}
\label{ssec:conjugate field}

Key to the definition of effective action are the rules of differential calculus relating expressions in the $x_i$'s and the $y_i$'s. Although a direct consequence of the chain rule, we state the following identity as 

\begin{proposition}
\label{difftrick}
With $g({\mathbf x})=\mathrm{M}({\mathbf x})$ in the change of variables \eqref{newvariables}, we have 
$$
	f({\mathbf x})=f^\mathrm{M}({\mathbf y})=(f^\mathrm{M}\prec \mathrm{M})({\mathbf x}),
$$ 
and therefore:
$$
	\frac{\partial}{\partial x_i}f({\mathbf x})
	=\frac{\partial y_k}{\partial x_i}\frac{\partial}{\partial y_k}f^\mathrm{M}({\mathbf y})
	=\mathrm{M}({\mathbf x})\frac{\partial}{\partial y_i}f^\mathrm{M}({\mathbf y})
	=(1+\mathrm{K}({\mathbf y}))\frac{\partial}{\partial y_i}f^\mathrm{M}({\mathbf y}),
$$
and we recover in particular the formula of Corollary \ref{cordergenf} expressed now in terms of $\mathbf y$ as:
$$
	\frac{\partial}{\partial x_i}\mathrm{M}({\mathbf x})
	=\frac{\partial y_k}{\partial x_i}\frac{\partial}{\partial y_k}\mathrm{K}({\mathbf y})
	=\mathrm{M}({\mathbf x})\frac{\partial}{\partial y_i}\mathrm{K}({\mathbf y})
	=(1+\mathrm{K}({\mathbf y}))\frac{\partial}{\partial y_i}\mathrm{K}({\mathbf y}).
$$
\end{proposition}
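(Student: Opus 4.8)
The plan is to treat the two displayed chains separately, since the opening one is essentially a matter of unwinding definitions while the differentiation identity is a direct application of the chain rule \eqref{chain}. For the identity $f(\mathbf{x}) = f^\mathrm{M}(\mathbf{y}) = (f^\mathrm{M} \prec \mathrm{M})(\mathbf{x})$, I would first note that $f^\mathrm{M}(\mathbf{y}) = f(\mathbf{x})$ holds by the very definition of $f^\mathrm{M}$ as the unique rewriting of $f$ in the new variables $\mathbf{Y} = \int \mathrm{M}$. For the second equality I would unpack the $\prec$-product: by \eqref{leftprod} one has $(f^\mathrm{M} \prec \mathrm{M})(\mathbf{x}) = f^\mathrm{M}(\mathbf{x}\mathrm{M}(\mathbf{x}))$, and since $x_i\mathrm{M}(\mathbf{x}) = y_i$ by \eqref{ztransfrom1}, substituting $x_i\mathrm{M}(\mathbf{x})$ for the formal variable is exactly evaluating $f^\mathrm{M}$ at $\mathbf{y}$. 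Hence $(f^\mathrm{M}\prec\mathrm{M})(\mathbf{x}) = f^\mathrm{M}(\mathbf{y})$, which closes the first chain.

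For the differentiation identity I would apply the chain rule \eqref{chain} to the field $\mathbf{g} = \int \mathrm{M} = \mathbf{Y}$, writing $f = f^\mathrm{M} \circ \mathbf{Y}$. The only substantive computation is the differential $\frac{\partial y_m}{\partial x_i}$, which I would obtain from the planar Leibniz rule \eqref{Leibniz} applied to $y_m = x_m\mathrm{M}(\mathbf{x})$: the first Leibniz summand contributes $\frac{\partial x_m}{\partial x_i}\,\mathrm{M}(\mathbf{x}) = \delta_{im}\mathrm{M}(\mathbf{x})$, while the second vanishes because the constant term of $x_m$ is zero, so $\frac{\partial y_m}{\partial x_i} = \delta_{im}\mathrm{M}(\mathbf{x})$. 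Feeding this into \eqref{chain}, the Kronecker symbol collapses the Einstein sum over $m$ and leaves $\frac{\partial}{\partial x_i}f = \mathrm{M}(\mathbf{x})\,\frac{\partial}{\partial y_i}f^\mathrm{M}(\mathbf{y})$. Substituting the rewritten Speicher identity $\mathrm{M}(\mathbf{x}) = 1 + \mathrm{K}(\mathbf{y})$ then yields $(1+\mathrm{K}(\mathbf{y}))\frac{\partial}{\partial y_i}f^\mathrm{M}(\mathbf{y})$, as claimed.

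The last displayed identity follows by specialising $f = \mathrm{M}$: here $f^\mathrm{M}(\mathbf{y}) = \mathrm{M}(\mathbf{x}) = 1 + \mathrm{K}(\mathbf{y})$, so $\frac{\partial}{\partial y_i}f^\mathrm{M}(\mathbf{y}) = \frac{\partial}{\partial y_i}\mathrm{K}(\mathbf{y})$, and the general formula collapses to the asserted expression. I would close by confirming that this is precisely Corollary \ref{cordergenf} transported into the $\mathbf{y}$ variables: expanding $(\frac{\partial}{\partial x_i}\mathrm{K})\bullet\mathrm{M}$ through the definition \eqref{monoprod} of $\bullet$ gives $\mathrm{M}(\mathbf{x})\,(\frac{\partial}{\partial x_i}\mathrm{K})(\mathbf{x}\mathrm{M}(\mathbf{x})) = \mathrm{M}(\mathbf{x})\frac{\partial}{\partial y_i}\mathrm{K}(\mathbf{y})$, matching the formula above.

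The main point requiring care — rather than a genuine obstacle — is the non-commutativity built into the planar calculus: I must keep the factor $\mathrm{M}(\mathbf{x})$ on the \emph{left} throughout, as dictated by the ordering of the two factors in \eqref{chain}, and I must use the one-sided planar Leibniz rule \eqref{Leibniz} (with the constant term $f(0)$ in front of the second summand) rather than the ordinary symmetric Leibniz rule. Once these placements are respected, every remaining step is a routine substitution.
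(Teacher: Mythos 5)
Your proof is correct and takes essentially the same approach as the paper: the paper states Proposition \ref{difftrick} without an explicit proof, introducing it as ``a direct consequence of the chain rule,'' and your argument is precisely that chain-rule computation carried out in detail, with the planar-Leibniz evaluation $\frac{\partial y_m}{\partial x_i}=\delta_{im}\mathrm{M}({\mathbf x})$ as the one substantive step, followed by the rewritten Speicher identity $\mathrm{M}({\mathbf x})=1+\mathrm{K}({\mathbf y})$ and the specialisation $f=\mathrm{M}$. Your closing consistency check against Corollary \ref{cordergenf} via the definition of $\bullet$ is also sound.
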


Let us move now to the Legendre transform of the cumulant generating series. Recall that in the simplest situation, the classical Legendre transform is obtained as follows. Let $f(t)$ be a strictly convex function. Its derivative $f'(t)$ is strictly increasing and can be inverted. Up to a constant, its Legendre transform $g(t)$ is then characterised by the identities $t=g'(f'(t))$ or $s=f'(g'(s))$. This is the definition we will adapt in the non-commutative multivariate context.

\begin{definition}[Conjugate field]
\label{def:conjfield}
Let ${\mathbf y}=(y_1,\dots, y_n)$ be the integral field of $\mathrm{M}$. Its conjugate field $\Phi=(\phi_1,\dots,\phi_n)$ is defined in terms of the free cumulants $\mathrm{K}({\mathbf y})$ by
\begin{equation}
\label{phi}
	\phi_i:=\frac{\partial \mathrm{K}({\mathbf y})}{\partial y_i}, \quad 1 \le i \le n.
\end{equation}
\end{definition}

Since $\mathrm{K}({\mathbf y})$ has no constant term, we have $\mathrm{K}({\mathbf y})=y_i\frac{\partial \mathrm{K}({\mathbf y})}{\partial y_i}$. Therefore we have 

\begin{proposition}
\label{prop:cumulant}
For the cumulant generating function $\mathrm{K}$ we have:
$$
	\mathrm{K}({\mathbf y})=y_i\phi_i.
$$
\end{proposition}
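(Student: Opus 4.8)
The plan is to observe that the statement is essentially a restatement of the Euler-type identity recorded immediately above, namely $\mathrm{K}({\mathbf y})=y_i\,\partial \mathrm{K}({\mathbf y})/\partial y_i$: inserting the definition \eqref{phi} of the conjugate field, $\phi_i=\partial \mathrm{K}({\mathbf y})/\partial y_i$, turns the right-hand side into $y_i\phi_i$. So the whole task reduces to justifying that identity, which I would do directly from the expansion of $\mathrm{K}$ in the new variables together with the functional derivation rule.

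First I would expand $\mathrm{K}$ as a non-commutative formal power series in the $y_i$,
$$
	\mathrm{K}({\mathbf y})=\sum_{p>0}\sum_{0<i_1,\dots,i_p\le n}\mathrm{k}^{(p)}_{i_1\cdots i_p}\,y_{i_1}\cdots y_{i_p}.
$$
This is legitimate: by \eqref{ztransfrom1} each $y_i=x_i+(\text{higher order terms})$, so any series in ${\mathbf x}$ re-expands uniquely in ${\mathbf y}$, and crucially $\mathrm{K}({\mathbf y})$ has no constant term. Then I would apply the functional derivative, which by the rules of Section \ref{sect:planarCalc} simply deletes the leftmost letter, giving
$$
	\frac{\partial \mathrm{K}({\mathbf y})}{\partial y_i}=\sum_{p>0}\sum_{i_1,\dots,i_p}\mathrm{k}^{(p)}_{i_1\cdots i_p}\,\delta_{i,i_1}\,y_{i_2}\cdots y_{i_p}.
$$

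Next I would multiply on the left by $y_i$ and sum over $i$ (Einstein's convention). The Kronecker symbol collapses the sum to the terms with $i=i_1$, reinstating the stripped leading letter $y_{i_1}$ in front of each truncated word, so that $y_i\,\partial \mathrm{K}({\mathbf y})/\partial y_i$ reproduces $\mathrm{K}({\mathbf y})$ monomial by monomial. Substituting $\phi_i=\partial \mathrm{K}({\mathbf y})/\partial y_i$ then yields $\mathrm{K}({\mathbf y})=y_i\phi_i$.

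There is no genuine obstacle here, only a point of non-commutative bookkeeping that must not be mishandled. Since $\partial/\partial y_i$ acts by removing the \emph{leftmost} occurrence rather than through a symmetric Leibniz rule, the recovery of $\mathrm{K}$ forces the multiplication by $y_i$ to be placed on the left; the Kronecker symbol then ensures that precisely the original leading index contributes, with no over-counting. This left-placement, together with the vanishing of the constant term of $\mathrm{K}({\mathbf y})$ (guaranteeing that every monomial has a leftmost letter to restore), is exactly what makes the identity hold.
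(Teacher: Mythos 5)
Your proof is correct and follows exactly the paper's own route: the paper simply notes that since $\mathrm{K}({\mathbf y})$ has no constant term, the Euler-type identity $\mathrm{K}({\mathbf y})=y_i\,\partial \mathrm{K}({\mathbf y})/\partial y_i$ holds (because the planar derivative strips the leftmost letter and left multiplication by $y_i$ with summation restores it), and then substitutes the definition \eqref{phi} of $\phi_i$. You have merely spelled out monomial-by-monomial the bookkeeping the paper leaves implicit, so there is nothing to add.
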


\begin{proposition}
\label{prop:moments}
The following equivalent identities hold true:
$$
	\frac{\partial}{\partial x_i}\mathrm{M}({\mathbf x})
	=\mathrm{M}({\mathbf x})\frac{\partial}{\partial y_i} \mathrm{K}({\mathbf y})
	=\mathrm{M}({\mathbf x})\phi_i,
$$
and
$$
	\mathrm{M}({\mathbf x})=1+x_i \mathrm{M}({\mathbf x})\phi_i.
$$
\end{proposition}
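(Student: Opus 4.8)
The plan is to assemble the statement from results already in place, as both displayed identities are essentially repackagings of Propositions~\ref{difftrick} and~\ref{prop:cumulant} together with Definition~\ref{def:conjfield}, so the work is one of bookkeeping rather than genuine computation.

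For the first chain of equalities, observe that the equality $\frac{\partial}{\partial x_i}\mathrm{M}({\mathbf x}) = \mathrm{M}({\mathbf x})\frac{\partial}{\partial y_i}\mathrm{K}({\mathbf y})$ is precisely the second displayed line of Proposition~\ref{difftrick} (the specialisation $g=f=\mathrm{M}$, where $\mathrm{M}^{\mathrm M}({\mathbf y}) = 1 + \mathrm{K}({\mathbf y})$ makes $\frac{\partial}{\partial y_i}\mathrm{M}^{\mathrm M}$ coincide with $\frac{\partial}{\partial y_i}\mathrm{K}({\mathbf y})$). The remaining equality $\mathrm{M}({\mathbf x})\frac{\partial}{\partial y_i}\mathrm{K}({\mathbf y}) = \mathrm{M}({\mathbf x})\phi_i$ is then immediate from the very definition $\phi_i := \partial \mathrm{K}({\mathbf y})/\partial y_i$ in Definition~\ref{def:conjfield}. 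No computation is needed here beyond citing the two earlier results.

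For the fixed-point identity I would start from the rewritten functional equation $\mathrm{M}({\mathbf x}) = 1 + \mathrm{K}({\mathbf y})$. Proposition~\ref{prop:cumulant} gives $\mathrm{K}({\mathbf y}) = y_i\phi_i$, and the definition of the integral field in \eqref{ztransfrom1}, with $g=\mathrm{M}$, gives $y_i = x_i\mathrm{M}({\mathbf x})$. Substituting the latter into the former yields
\[
	\mathrm{M}({\mathbf x}) = 1 + y_i\phi_i = 1 + x_i\mathrm{M}({\mathbf x})\phi_i,
\]
which is the claim.

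Finally, to justify the word \emph{equivalent} I would exhibit a direct passage from the derivative identity to the fixed-point identity: multiplying $\frac{\partial}{\partial x_i}\mathrm{M}({\mathbf x}) = \mathrm{M}({\mathbf x})\phi_i$ on the left by $x_i$ and summing over $i$, then invoking the Euler-type relation $x_i\frac{\partial}{\partial x_i}\mathrm{M}({\mathbf x}) = \mathrm{M}({\mathbf x}) - 1$ — which follows at once from the defining rule of the functional derivative, since $x_j\frac{\partial}{\partial x_j}$ restores the leading letter of each monomial and annihilates the constant term — turns the left-hand side into $\mathrm{M}({\mathbf x}) - 1$ and reproduces $\mathrm{M}({\mathbf x}) = 1 + x_i\mathrm{M}({\mathbf x})\phi_i$. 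I expect the only point demanding care, rather than a genuine obstacle, to be the non-commutative bookkeeping: one must respect the left-to-right ordering when substituting $y_i = x_i\mathrm{M}({\mathbf x})$ into the left factor of $y_i\phi_i$, and keep Einstein's summation over $i$ in view throughout. There is no analytic or combinatorial difficulty.
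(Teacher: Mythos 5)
Your proof is correct, and for the first chain of equalities it coincides exactly with the paper's own proof, which consists of a single line citing Proposition~\ref{difftrick} together with Definition~\ref{def:conjfield}. The paper in fact stops there: it never spells out a derivation of the fixed-point identity $\mathrm{M}({\mathbf x})=1+x_i\mathrm{M}({\mathbf x})\phi_i$, nor of the claimed equivalence, leaving both implicit in the word ``equivalent''. You fill this gap twice over: once by substitution ($\mathrm{M}=1+\mathrm{K}({\mathbf y})$, $\mathrm{K}({\mathbf y})=y_i\phi_i$ from Proposition~\ref{prop:cumulant}, and $y_i=x_i\mathrm{M}({\mathbf x})$), and once by passing directly from the derivative identity to the fixed-point identity via the Euler-type relation $x_i\frac{\partial}{\partial x_i}\mathrm{M}({\mathbf x})=\mathrm{M}({\mathbf x})-1$, which is indeed valid for the paper's left functional derivative (stripping the leading letter and re-attaching it restores every non-constant monomial and kills the constant term). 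Both supplements are sound and respect the non-commutative ordering; your write-up is, if anything, more complete than the paper's.
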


\begin{proof}
Indeed, by Proposition \ref{difftrick}, 
$$
	\frac{\partial}{\partial x_i}\mathrm{M}({\mathbf x})
	=\mathrm{M}({\mathbf x})\frac{\partial}{\partial y_i}\mathrm{K}({\mathbf y})
	=\mathrm{M}({\mathbf x})\phi_i.
$$
\end{proof}

\subsection{Effective action}\label{ssec:effective action}
We will assume from now on that the non-commutative random variables $a_1,\ldots,a_n$ used to define the coefficients of $\mathrm{M}({\mathbf x})$ are centered, that is, $\varphi(a_i)=0$,  $i=1,\ldots,n$. This implies in particular that $\mathrm{K}(\mathbf y)$ has no linear part and as a result that the conjugate field $\Phi$ has no constant part. Notice now that the linear part (in the variables $y_j$) of $\phi_i$ is, by definition (see \eqref{phi} in Definition~\ref{def:conjfield}), the sum $\mathrm{k}^{(2)}_{ij}y_j$. The field $\Phi$ defines therefore a (invertible, formal) change of coordinates with respect to $\mathbf y$ (and therefore also with respect to $\mathbf x$) if and only if the $n \times n$ matrix 
\begin{equation}
\label{covariance}
	{\mathcal K}:=(\mathrm{k}^{(2)}_{ij})_{1 \le i,j \le n}
\end{equation}
is invertible. As the matrix entries $\mathrm{k}^{(2)}_{ij}$ are the order-2 free cumulants, this matrix is the free probability analogue of the classical covariance matrix -- for a family $(V_1,\ldots,V_n)$ of real random variables. Assuming that ${\mathcal K}$ is invertible is therefore the free probability analogue of the assumption that the covariance matrix is invertible (which is the usual assumption made when studying for example the distribution of families of Gaussian variables). This justifies the naturalness of the following assumption:

\begin{definition}
\label{def:regular}
The conjugate field $\Phi=(\phi_1,\ldots,\phi_n)$, with components $\phi_i=\phi_i(\mathbf y)$, $1 \le i \le n$, is regular if and only if it has no constant part and the matrix  ${\mathcal K}$ is invertible. 
\end{definition}

As $\phi_i$ is the sum of $\mathrm{k}^{(2)}_{ij}y_j$ and higher oder non-commutative monomials (of degree $\geq 2$) in the $y_j$, the regularity assumption implies in particular that the $y_i$ can be expressed as formal power series in the $\phi_j$. 

\begin{definition} 
Given a regular conjugate field $\Phi$ of the integral field $\mathbf y$ associated to $\mathrm{M}$, the effective action (or Legendre transform of $\mathrm{K}$, see the proposition below) is the functional $\mathrm{L}\in R_\Phi$,
\begin{equation}
\label{Lengendre1}	
	\mathrm{L}(\Phi)=\mathrm{L}(\phi_1,\ldots,\phi_n)
\end{equation}
without constant terms, defined by one of the following equivalent formulas:
$$
	\frac{\partial}{\partial \phi_i}\mathrm{L}( \Phi)=y_i,
$$
$$
	\mathrm{L}(\Phi)=\phi_iy_i.
$$
\end{definition}

Explicitly, the effective action is a generating series in the components $\phi_i$ of the regular conjugate field $\Phi$
\begin{equation}
\label{Legendre2}	 
	\mathrm{L}(\Phi)= \sum\limits_{k>0  \atop 0< i_1,\ldots,i_k\leq n }
	\ell^{(k)}_{i_1 \cdots i_k}\phi_{i_1}\cdots \phi_{i_k},
\end{equation}
with so-called one-particle irreducible (1PI) coefficients of order $k$
\begin{equation}
\label{Legendre2a}	 
		\ell^{(k)}_{i_1 \cdots i_k}
		= \left. \frac{\partial^k}{\partial{\phi_{i_k}} \cdots \partial{\phi_{i_1}}}\right |_{\Phi=0}
	\mathrm{L}(\Phi).
\end{equation}

The following proposition shows that $\mathrm{L}$ is a non-commutative Legendre transform of $\mathrm{K}$, by analogy with the classical univariate case (for a strictly convex real function). Recall that, as $\mathrm{L}$ and $\mathrm{K}$ have no constant terms, they are entirely characterised by their corresponding differential fields.

\begin{proposition}
\label{prop:Legendre}
For a regular conjugate field $\Phi$ of the integral field $\mathbf y$, we have the non-commutative Legendre transform identities:
$$
	\partial \mathrm{L} \circ \partial \mathrm{K}({\mathbf y})
	=\mathbf y,
$$
and
$$
	\partial \mathrm{K} \circ \partial \mathrm{L}(\Phi)
	=\Phi,
$$
where the differential $\partial \mathrm{L}$ is taken with respect to $\Phi$ and the differential $\partial \mathrm{K}$ with respect to $\mathbf y$, i.e., $\partial \mathrm{L} =(\mathrm{L}(\Phi)_{\phi_1},\ldots,\mathrm{L}(\Phi)_{\phi_n})$ and $\partial \mathrm{K}=(\mathrm{K}({\mathbf y})_{y_1},\dots,\mathrm{K}({\mathbf y})_{y_n})$, respectively.
\end{proposition}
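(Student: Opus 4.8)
The plan is to reduce both identities to the single fact that the conjugate field and its inverse are mutually inverse formal changes of variables. The first step is to identify the two differential fields occurring in the statement. By Definition~\ref{def:conjfield} each component of $\partial\mathrm{K}$ is $\mathrm{K}({\mathbf y})_{y_i}=\phi_i$, so that $\partial\mathrm{K}=\Phi$; dually, the defining relation $\mathrm{L}(\Phi)_{\phi_i}=y_i$ of the effective action gives $\partial\mathrm{L}={\mathbf y}$. Thus $\partial\mathrm{K}$ is precisely the substitution ${\mathbf y}\mapsto\Phi$ that expresses each $\phi_i$ as a formal power series in the $y_j$, while $\partial\mathrm{L}$ is the substitution $\Phi\mapsto{\mathbf y}$ expressing each $y_i$ as a formal power series in the $\phi_j$.

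Next I would invoke the regularity hypothesis (Definition~\ref{def:regular}). Since $\Phi$ has no constant part and its linear part is the invertible matrix $\mathcal K$, the field $\Phi=\Phi({\mathbf y})$ is a formally invertible change of coordinates, and there is a unique field ${\mathbf y}={\mathbf y}(\Phi)$ without constant term inverting it. By the very definition of $\mathrm{L}$, the field $\partial\mathrm{L}={\mathbf y}$ is exactly this inverse substitution.

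With these identifications in hand, the two identities are immediate. The composition $\partial\mathrm{L}\circ\partial\mathrm{K}$ substitutes $\phi_i\mapsto\phi_i({\mathbf y})$ into the components $y_i(\Phi)$ of $\partial\mathrm{L}$, which by definition of the inverse substitution returns $y_i$; hence $\partial\mathrm{L}\circ\partial\mathrm{K}({\mathbf y})={\mathbf y}$. Symmetrically, $\partial\mathrm{K}\circ\partial\mathrm{L}$ substitutes $y_i\mapsto y_i(\Phi)$ into $\phi_i({\mathbf y})$ and returns $\phi_i$, giving $\partial\mathrm{K}\circ\partial\mathrm{L}(\Phi)=\Phi$.

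The main obstacle — and the only point requiring genuine care rather than bookkeeping — is establishing that the substitution ${\mathbf y}\mapsto\Phi$ encoded by the conjugate field is formally invertible and that its inverse is precisely $\partial\mathrm{L}$. This is where regularity enters decisively: invertibility of the linear part $\mathcal K$ allows one to solve recursively, order by order in the total degree, for the $y_i$ as power series in the $\phi_j$, and the uniqueness of this solution forces it to coincide with the gradient field of $\mathrm{L}$. Once this identification is secured, the two Legendre identities follow formally from the definitions, exactly as in the classical univariate case recalled before the statement.
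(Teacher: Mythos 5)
Your proposal is correct and follows essentially the same route as the paper: both proofs rest on the definitional identities $\partial \mathrm{K}=\Phi$ and $\partial \mathrm{L}={\mathbf y}$, so that each composition collapses immediately to the identity substitution. The invertibility discussion you single out as the ``main obstacle'' is indeed the content of the regularity assumption, but the paper handles it once and for all in Definition~\ref{def:regular} and the remarks preceding the definition of $\mathrm{L}$ (so that $\mathrm{L}$ is well-defined as a series in $\Phi$), leaving the proof of the proposition itself as the same two-line substitution you give.
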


\begin{proof}
Indeed, we find
\begin{equation}
\label{Legendre3}	
	\frac{\partial}{\partial\phi_i}\mathrm{L}\Big(\frac{\partial}{\partial y_1} 
	\mathrm{K}({\mathbf y}),\ldots, \frac{\partial}{\partial y_n} \mathrm{K}({\mathbf y})\Big)
	=\frac{\partial}{\partial\phi_i}\mathrm{L}(\Phi)
	=y_i,
\end{equation}
and
\begin{equation}
\label{Legendre4}	
	\frac{\partial}{\partial y_i}\mathrm{K}\Big(\frac{\partial}{\partial \phi_1} 
	\mathrm{L}({\Phi}),\ldots, \frac{\partial}{\partial \phi_n} \mathrm{L}({\Phi})\Big)
	=\frac{\partial}{\partial y_i}\mathrm{K}({\mathbf y})
	=\phi_i.
\end{equation}
\end{proof}

Notice that the identities
\begin{equation}
\label{Legendre5}
	\mathrm{L}(\Phi)
	=\phi_iy_i,
	\qquad
	\mathrm{K}({\mathbf y})
	=y_i\phi_i,
\end{equation}
 clearly display the non-commutative nature of the non-commutative Legendre transform.
In the planar QFT literature, slightly different conventions appear in the definition of the Legendre transform of the connected planar Green's functions. The conjugate variables are also defined using partial derivatives of the planar connected Green's function, but sign conventions and the formulation of the identity defining the Legendre transform may vary. For example, when translated in our notation and conventions, the  fundamental non-commutative Legendre transform identity in Cvitanovich et al.~would read 
\begin{equation}
\label{Legendre5a}
	\mathrm{L}(\Phi) + \mathrm{K}({\mathbf y}) = \phi_iy_i + y_i\phi_i.
\end{equation} 
See \cite[Eq.~(40)]{Brezin}, \cite[Eqs.~(15) \& (16)]{cvitanovic1}, \cite[Eqs.~(4.16) \& (4.17)]{cvitanovic2}. The differential formulation we have chosen seems a natural choice, at least from the algebraic point of view we have adopted.

%%%%%%%%%%%%%%%%%%
%%%%%%%%%%%%%%%%%%

\section{Relations between effective action and free cumulants}
\label{ssec:realtions}

We will describe now in greater detail the relations between the 1PI coefficients $\ell^{(k)}_{i_1 \cdots i_k}$ (the ``$k$-point 1PI correlation functions'' in QFT parlance) in the effective action \eqref{Legendre2} and the free cumulants $\mathrm{k}^{(p)}_{i_1 \cdots i_p}$ (the ``$k$-point connected correlation functions'') in \eqref{eq:cumulants}.   We systematically assume from now on that the conjugate field $\Phi$ of $\mathbf y$ is regular.

%%%%%%%%%%%%%%%%%%

\subsection{2-point correlation functions identities}

For completeness, let us state some elementary but useful identities before turning to more advanced properties.

Using \eqref{phi}, the chain rule gives
$$
	\delta_{ji} =\frac{\partial }{\partial \phi_j} \phi_i 
	= \frac{\partial }{\partial \phi_j} (\frac{\partial}{\partial y_i} \mathrm{K}({\mathbf y}))
	= \frac{\partial y_l}{\partial \phi_j} \frac{\partial^2}{\partial y_l \partial y_i} \mathrm{K}({\mathbf y})
	=\frac{\partial y_l}{\partial \phi_j}  \mathrm{K}({\mathbf y})_{y_i y_l},
$$
while from \eqref{Legendre3} we obtain
$$
	 \frac{\partial y_l}{\partial \phi_j} 
	 = \frac{\partial^2}{\partial\phi_j\partial\phi_l}\mathrm{L}(\Phi) 
	 = \mathrm{L}(\Phi)_{\phi_l\phi_j}.
$$
Hence, we find
\begin{equation}
\label{matrix1}
	 \mathrm{L}(\Phi)_{\phi_l\phi_j}\mathrm{K}({\mathbf y})_{y_i y_l}= \delta_{ji},
\end{equation}
which may be understood as a (nontrivial) generating series refinement of the matrix identity $\mathcal K\mathcal K^{-1}=\text{Id}$. Indeed $ \mathrm{K}({\mathbf y})_{y_i y_l}$  has constant term
$$
	\mathrm{K}(0)_{y_i y_l} = {\mathcal K}_{il}=\mathrm{k}^{(2)}_{il}. 
$$
From
\begin{equation}
\label{matrix1bis}
	 \mathrm{K}(0)_{y_i y_l}\mathrm{L}(0)_{\phi_l\phi_j}= \delta_{ij},
\end{equation}
we get (recall that $\mathcal K$ is invertible since $\Phi$ is regular)
$$
	{\mathcal L}_{lj}:=\mathrm{L}(0)_{\phi_l\phi_j} = ({\mathcal K}^{-1})_{lj}.
$$

By symmetry in Legendre transform arguments, calculations dualise. As an illustration we show what happens when considering our previous identities.
$$
	\delta_{ji} =\frac{\partial }{\partial y_j} y_i 
	= \frac{\partial }{\partial y_j} (\frac{\partial}{\partial \phi_i} \mathrm{L}({\Phi}))
= \frac{\partial \phi_l}{\partial y_j} \frac{\partial^2}{\partial \phi_l \partial \phi_i} \mathrm{L}({\Phi})
	=\frac{\partial \phi_l}{\partial y_j}  \mathrm{L}({\Phi})_{\phi_i \phi_l},
$$
while 
$$
	 \frac{\partial \phi_l}{\partial y_j} 
	 = \frac{\partial^2}{\partial y_j\partial y_l}\mathrm{K}(\mathbf y) 
	 = \mathrm{K}(\mathbf y)_{y_ly_j},
$$
hence 
\begin{equation}
\label{matrix2}
	 \mathrm{K}({\mathbf y})_{y_l y_j}\mathrm{L}(\Phi)_{\phi_i\phi_l}= \delta_{ji}. 
\end{equation}

These relations between the two families of 2-point correlation functions (the $\mathrm{K}({\mathbf y})_{y_l y_j}$ and the $\mathrm{L}(\Phi)_{\phi_i\phi_l}$) lead to various other nontrivial identities typical of quantum field computations with Green's functions (of which various other examples can be found in the planar QFT literature, although the reader should be warned that they sometimes rely on extra hypothesis such as the existence of specific interaction terms such as $\phi^3$ of $\phi^4$).

Since the constant term of $\mathrm{K}({\mathbf y})_{y_l y_j}$, as a function of the $y_k$, is equal to its constant term, as a function of the $\phi_k$, \eqref{matrix2} and the planar Leibniz rule imply
$$
	0=\frac{\partial}{\partial \phi_k} \delta_{ji} 
	= \frac{\partial \mathrm{K}({\mathbf y})_{y_l y_j}}{\partial \phi_k} \mathrm{L}(\Phi)_{\phi_i\phi_l}
	+ \mathrm{K}(0)_{y_l y_j}\mathrm{L}(\Phi)_{\phi_i\phi_l\phi_k}.
$$
Multiplying from the right with $\mathrm{K}({\mathbf y})_{y_p y_i}$, which introduces an extra summation over the index $i$, we get
$$
	\frac{\partial \mathrm{K}({\mathbf y})_{y_l y_j}}{\partial \phi_k} 
	\mathrm{L}(\Phi)_{\phi_i\phi_l}\mathrm{K}({\mathbf y})_{y_p y_i}
	=- \mathrm{K}(0)_{y_l y_j}\mathrm{L}(\Phi)_{\phi_i\phi_l\phi_k}\mathrm{K}({\mathbf y})_{y_p y_i}.
$$
From \eqref{matrix1} we obtain
$$
	\frac{\partial \mathrm{K}({\mathbf y})_{y_l y_j}}{\partial \phi_k} 
	\mathrm{L}(\Phi)_{\phi_i\phi_l}\mathrm{K}({\mathbf y})_{y_p y_i}
	=\frac{\partial \mathrm{K}({\mathbf y})_{y_l y_j}}{\partial \phi_k}
	\delta_{pl}=\frac{\partial \mathrm{K}({\mathbf y})_{y_p y_j}}{\partial \phi_k},
$$
so that finally
\begin{equation}\label{3point}
	\frac{\partial \mathrm{K}({\mathbf y})_{y_p y_j}}{\partial \phi_k}
	=- \mathrm{K}(0)_{y_l y_j}\mathrm{L}(\Phi)_{\phi_i\phi_l\phi_k}\mathrm{K}({\mathbf y})_{y_p y_i}.
\end{equation}

%%%%%%%%%%%%%%%%%%

\subsection{Planar graphical calculus}
\label{ssec:planargraphcalc}

We will explore now systematically the relations between 1PI and connected correlation functions using a graphical calculus to handle them.

\label{ssec:pgc}
Applying the chain rule yields
\begin{equation}
\label{y2phi}
	\frac{\partial}{\partial y_i} 
	= \frac{\partial \phi_j}{\partial y_i} \frac{\partial }{\partial \phi_j} 
	= \frac{\partial^2}{\partial y_i\partial y_j} \mathrm{K}({\mathbf y})\frac{\partial }{\partial \phi_j} 
	= \mathrm{K}({\mathbf y})_{ y_j y_i}\frac{\partial }{\partial \phi_j}.
\end{equation}
Applying the planar Leibniz rule, the last equality yields
\begin{align}
\label{doublechain}
\begin{aligned}
	\frac{\partial}{\partial y_i} \frac{\partial}{\partial y_j} 
	&= \mathrm{K}({\mathbf y})_{ y_l y_i}\frac{\partial }{\partial \phi_l}
	   \mathrm{K}({\mathbf y})_{ y_m y_j}\frac{\partial }{\partial \phi_m}\\
	&= \mathrm{K}({\mathbf y})_{ y_l y_i}\frac{\partial \mathrm{K}({\mathbf y})_{ y_m y_j}}{\partial \phi_l}
	   \frac{\partial }{\partial \phi_m}
	   +\mathrm{K}({\mathbf y})_{ y_l y_i}
	   \mathrm{K}(0)_{ y_m y_j}\frac{\partial }{\partial \phi_l}\frac{\partial }{\partial \phi_m}\\   
	&\stackrel{\eqref{3point}}{=} -\mathrm{K}({\mathbf y})_{ y_l y_i}
		\mathrm{K}(0)_{y_k y_j}\mathrm{L}(\Phi)_{\phi_p\phi_k\phi_l}\mathrm{K}({\mathbf y})_{y_m y_p}
	\frac{\partial }{\partial \phi_m}   
	+\mathrm{K}({\mathbf y})_{ y_l y_i}
	   \mathrm{K}(0)_{ y_m y_j}\frac{\partial }{\partial \phi_l}\frac{\partial }{\partial \phi_m}.
\end{aligned}
\end{align}
When applying the last equality to the computation of 
$$ 
	\mathrm{K}({\mathbf y})_{y_{i_1}y_{i_2} y_{i_3}}
	=\frac{\partial^3}{\partial{y_{i_3}} \partial{y_{i_2}} \partial{y_{i_1}}}
	\mathrm{K}({\mathbf y})
	= \frac{\partial^{2}}{\partial{y_{i_3}}  \partial{y_{i_2}}}
	\phi_{i_1},
$$
we can ignore all higher order $\phi$-derivations because on the right-hand side the second order derivative, $\frac{\partial^{2}}{\partial{y_{i_3}}  \partial{y_{i_2}}}$, would be acting on $\phi_j$. In general, for the same reason, to compute $\mathrm{K}({\mathbf y})_{y_{i_1} \cdots y_{i_k}}$, it will be enough to consider only terms in the expansion of $\frac{\partial^{k-1}}{\partial{y_{i_k}} \cdots \partial{y_{i_2}}}$ that are linear in the differential operators $\frac{\partial }{\partial \phi_m}$. We will use freely this observation later on.

\medskip

We will now use a diagrammatical calculus to express $\mathrm{K}({\mathbf y})_{y_{i_1} \cdots y_{i_k}}$ as non-commutative polynomials in the $\mathrm{L}(\Phi)_{\phi_{j_1} \cdots \phi_{j_p}}$, $2 < p \le k$, the 2-point connected correlation functions $\mathrm{K}(0)_{ y_s y_t}$, and the 2-point connected correlation functionals $\mathrm{K}({\mathbf y})_{ y_i y_l}$. 

We start with 
\begin{equation}
\label{2pointfunction}
	\frac{\partial}{\partial y_2} \frac{\partial}{\partial y_1} \mathrm{K}({\mathbf y})
	= \mathrm{K}({\mathbf y})_{y_1y_2},
\end{equation}
which we represent graphically
\begin{equation}
\label{2point}
\begin{aligned}
\begin{tikzpicture}
\def \n {2}
\def \radius {1.5cm}
\foreach \s in {1,...,\n}
{
\draw[fill=blue]({360/\n * (\s - 1)}:\radius) circle (2 pt) node [above] {};
  \node at ({360/\n * (\s + 1)}:\radius + 0.2cm) {\tiny{$\s$}} ;
  \draw[-, >=latex] ({360/\n * (\s - 1)}:\radius) 
    arc ({360/\n * (\s - 1)}:{360/\n * (\s)}:\radius);
}

\foreach \from/\to in {1/2}
\draw ({360/\n * (\from - 1)}:\radius) -- ({360/\n * (\to - 1)}:\radius);

 %\draw[fill=black](0,0) circle (1 pt) node [above] {};
\end{tikzpicture}
\end{aligned}
\end{equation}
Next, we consider 
\begin{align*}
	\frac{\partial}{\partial y_3} \frac{\partial}{\partial y_2} \frac{\partial}{\partial y_1} \mathrm{K}({\mathbf y})
	&= \mathrm{K}({\mathbf y})_{y_1y_2y_3}\\
	&=\frac{\partial}{\partial y_3} \frac{\partial}{\partial y_2} \phi_1\\
	&= -\mathrm{K}({\mathbf y})_{ e_3 y_3}
		\mathrm{K}(0)_{e_2y_2}\mathrm{L}(\Phi)_{e_1e_2e_3}\mathrm{K}({\mathbf y})_{y_me_1}
	\frac{\partial }{\partial \phi_m}  \phi_1\\
	&= -\mathrm{K}({\mathbf y})_{ e_3 y_3}
		\mathrm{K}(0)_{e_2y_2}\mathrm{L}(\Phi)_{e_1e_2e_3}\mathrm{K}({\mathbf y})_{y_1e_1},
\end{align*}
which we represent graphically (up to the sign) 
\begin{equation}
\label{3-graph}
\begin{aligned}
\begin{tikzpicture}
\def \n {3}
\def \radius {1.5cm}
\foreach \s in {1,...,\n}
{
\draw[fill=blue]({360/\n * (\s - 1) - 30}:\radius) circle (2 pt) node [above] {};
  \node at ({360/\n * (\s + 2) - 30}:\radius + 0.22cm) {\tiny{$\s$}} ;
  \draw[-, >=latex] ({360/\n * (\s - 1)}:\radius) 
    arc ({360/\n * (\s - 1)}:{360/\n * (\s)}:\radius);
}

 \draw[fill=black](0,0) circle (1.5 pt) node [above] {};

%\foreach \from/\to in {1/3}
%\draw ({360/\n * (\from - 1)}:\radius) -- ({360/\n * (\to - 1)}:\radius);

\foreach \from/\to in {1/3}
\draw({360/\n * (\from - 1) - 30}:\radius) -- node [pos=.8,below] {\small{$e_1$}}  (0,0);

\foreach \from/\to in {2/3}
\draw ({360/\n * (\from - 1) - 30}:\radius) -- node [pos=.8,right] {\small{$e_2$}} (0,0) ;

\foreach \from/\to in {3/3}
\draw ({360/\n * (\from - 1) - 30}:\radius) -- node [pos=.7,above] {\small{$e_3$}} (0,0);
\end{tikzpicture}
\end{aligned}
\end{equation}

In this representation, starting from the rightmost term of the expression 
$$
	\mathrm{K}({\mathbf y})_{ e_3 y_3}
		\mathrm{K}(0)_{e_2y_2}\mathrm{L}(\Phi)_{e_1e_2e_3}\mathrm{K}({\mathbf y})_{y_1e_1},
$$
the $\mathrm{K}({\mathbf y})_{ab}$ are represented as edges with endpoints decorated by $a$ and $b$ and $\mathrm{L}(\Phi)_{e_1e_2e_3}$ by a vertex with three outcoming half-edges decorated counterclockwise by $e_1,e_2,e_3$. The ``Feynman rules'' that associate graphs to functionals, and conversely, will be explained in detail later on. 

\begin{remark}
Note that we simplified notation by introducing new dummy variable symbols $e_j$. They correspond to Einstein summations and express joint derivations with respect to $y_j$ and $\phi_j$ on $\mathrm{K}({\mathbf y})$ respectively $\mathrm{L}(\Phi)$. For example 
$$
	\mathrm{K}(0)_{y_j e_2}\mathrm{L}(\Phi)_{e_2\phi_q}
	:=\sum\limits_{k=1}^n\mathrm{K}(0)_{y_j y_k}\mathrm{L}(\Phi)_{\phi_k\phi_q}.
$$
These symbols will prove especially useful in the graphical representation of 1PI vertex terms such as the 3-vertex functional
$$
	\mathrm{K}({\mathbf y})_{ e_3 y_3}
		\mathrm{K}(0)_{e_2y_2}\mathrm{L}(\Phi)_{e_1e_2e_3}\mathrm{K}({\mathbf y})_{y_1e_1},
$$
whose notation we will below further simplify by writing
$$
	\mathrm{K}({\mathbf y})_{ e_3 3}
		\mathrm{K}(0)_{e_22}\mathrm{L}(\Phi)_{e_1e_2e_3}\mathrm{K}({\mathbf y})_{1e_1}.
$$
\end{remark}
Besides, we will also use the notation $\cong$ to denote that two differential operators are equal up to elements (linear combinations of iterated products of $\frac{\partial}{\partial \phi_j}$)
that vanish when applied to the $\phi_i$'s.

Next we compute the order four term, $\mathrm{K}({\mathbf y})_{y_1y_2y_3y_4}$, using lower order computations
$$
	 \mathrm{K}({\mathbf y})_{y_1y_2y_3y_4}
	 =\frac{\partial}{\partial y_4}\frac{\partial}{\partial y_3} \frac{\partial}{\partial y_2} 
	 \frac{\partial}{\partial y_1} \mathrm{K}({\mathbf y})
	 =\frac{\partial}{\partial y_4}\frac{\partial}{\partial y_3} \frac{\partial}{\partial y_2} 
	\phi_1.
$$
Using the  planar chain and Leibniz rules we get  
\begin{align*}
	\lefteqn{
	\frac{\partial}{\partial y_4}\frac{\partial}{\partial y_3} \frac{\partial}{\partial y_2} 
	\cong \frac{\partial}{\partial y_4}\Big(
	-\mathrm{K}({\mathbf y})_{ e_3 3}
		\mathrm{K}(0)_{e_22}\mathrm{L}(\Phi)_{e_1e_2e_3}\mathrm{K}({\mathbf y})_{y_me_1}
		\frac{\partial }{\partial \phi_m} \Big)  }\\
&=\mathrm{K}({\mathbf y})_{y_j4}\frac{\partial }{\partial \phi_j}\Big(
	-\mathrm{K}({\mathbf y})_{ e_3 3}
		\mathrm{K}(0)_{e_22}\mathrm{L}(\Phi)_{e_1e_2e_3}\mathrm{K}({\mathbf y})_{y_me_1}
		\frac{\partial }{\partial \phi_m} \Big)\\
&\cong\mathrm{K}({\mathbf y})_{e_44}\Big(
		\mathrm{K}(0)_{e_53}\mathrm{L}(\Phi)_{e_6e_5e_4}\mathrm{K}({\mathbf y})_{e_3e_6}
		\mathrm{K}(0)_{e_22}\mathrm{L}(\Phi)_{e_1e_2e_3}\mathrm{K}({\mathbf y})_{y_me_1}
		\frac{\partial }{\partial \phi_m}\\
&-    \mathrm{K}(0)_{ e_3 3}
		\mathrm{K}(0)_{e_22}\mathrm{L}(\Phi)_{e_1e_2e_3e_4}\mathrm{K}({\mathbf y})_{y_me_1}
		\frac{\partial }{\partial \phi_m}\\
	&+\mathrm{K}(0)_{ e_3 3}
		\mathrm{K}(0)_{e_22}\mathrm{L}(\Phi)_{e_1e_2e_3}
		\mathrm{K}(0)_{e_5e_1}\mathrm{L}(\Phi)_{e_6e_5e_4}\mathrm{K}({\mathbf y})_{y_me_6}
		\frac{\partial }{\partial \phi_m}\Big).
\end{align*}
This leads to a decomposition of $\mathrm{K}({\mathbf y})_{y_1y_2y_3y_4}$ as a sum of three terms that
 we can graphically represent as follows. The first term
\begin{align*}
	\lefteqn{
	\mathrm{K}({\mathbf y})_{e_44}
		\mathrm{K}(0)_{e_53}\mathrm{L}(\Phi)_{e_6e_5e_4}\mathrm{K}({\mathbf y})_{e_3e_6}
		\mathrm{K}(0)_{e_22}\mathrm{L}(\Phi)_{e_1e_2e_3}\mathrm{K}({\mathbf y})_{y_me_1}
		\frac{\partial }{\partial \phi_m}\phi_1}\\
	&=\mathrm{K}({\mathbf y})_{e_44}
		\mathrm{K}(0)_{e_53}\mathrm{L}(\Phi)_{e_6e_5e_4}\mathrm{K}({\mathbf y})_{e_3e_6}
		\mathrm{K}(0)_{e_22}\mathrm{L}(\Phi)_{e_1e_2e_3}\mathrm{K}({\mathbf y})_{1e_1}
\end{align*}
corresponds to 
\begin{equation}
\label{4point2}
\begin{aligned}
\begin{tikzpicture}
\def \n {4}
\def \radius {1.5cm}
\foreach \s in {1,...,\n}
{
\draw[fill=blue]({360/\n * (\s - 1)- 45}:\radius) circle (2 pt) node [above] {};
  \node at ({360/\n * (\s + 3)- 45}:\radius + 0.22cm) {\tiny{$\s$}} ;
  \draw[-, >=latex] ({360/\n * (\s - 1)}:\radius) 
    arc ({360/\n * (\s - 1)}:{360/\n * (\s)}:\radius);
}

%\draw[fill=black](0,0) circle (1.5 pt) node [above] {};

%\foreach \from/\to in {1/3}
%\draw ({360/\n * (\from - 1)}:\radius) -- ({360/\n * (\to - 1)}:\radius);

\foreach \from/\to in {1/3}
\draw ({360/\n * (\from - 1)- 45}:\radius) -- node [pos=.7,below] {\small{$e_1$}} ({360/\n * (\from - 1)- 45}:\radius/2);

\draw ({360/\n * 0 - 45}:\radius/2) -- node [pos=.75,below] {\small{$e_3$}} (0,0);

 \draw[fill=black]({360/\n * 0 - 45}:\radius/2) circle (1.5 pt) node [above] {};

\foreach \from/\to in {2/3}
\draw ({360/\n * (\from - 1)- 45}:\radius) -- node [pos=.7,below] {\hspace{0.4cm}\small{$e_2$}} ({360/\n * 0 - 45}:\radius/2) ;

\foreach \from/\to in {3/3}
\draw ({360/\n * (\from - 1)- 45}:\radius) -- node [pos=.75,above] {\small{$e_5$}} ({360/\n * (\from - 1)- 45}:\radius/2) ;

\draw ({360/\n * 2 - 45}:\radius/2) -- node [pos=.8,above] {\small{$e_6$}} (0,0);

\draw[fill=black]({360/\n * 2 - 45}:\radius/2) circle (1.5 pt) node [above] {};

\foreach \from/\to in {4/3}
\draw ({360/\n * (\from - 1)- 45}:\radius) -- node [pos=.65,above] {\hspace{-0.3cm}\small{$e_4$}} ({360/\n * 2 - 45}:\radius/2);
\end{tikzpicture}
\end{aligned}
\end{equation}
which is obtained from the diagram \eqref{3-graph} by the local graphical operation at the leaf vertex labeled $3$\\
$$
\begin{array}{ccc}
\begin{tikzpicture}
\def \n {1}
\def \radius {1.5cm}
\foreach \s in {1,...,\n}
{
\draw[fill=blue]({360/\n * (\s - 1) - 140}:\radius) circle (2 pt) node [above] {};
  \node at ({360/\n * (\s + 2) - 140}:\radius + 0.22cm) {\tiny{$3$}} ;
  \draw[-, >=latex] ({360/\n * (\s)}:-\radius) 
    arc ({180/\n * (\s)}:{270/\n * (\s)}:\radius);
}

 \draw[fill=black](0,0) circle (1.5 pt) node [above] {};

\foreach \from/\to in {1/3}
\draw({360/\n * (\from - 1) - 140}:\radius) -- node [pos=.7,right] {\small{$e_3$}}  (0,0);

\end{tikzpicture}
\quad & & \\[-1cm]
 & \xrightarrow{\makebox{ {\small{local replacement}} }}\ & \\[-2cm]
& &
\begin{tikzpicture}
\def \radius {1.5cm}

  \draw[fill=blue](200:\radius) circle (2 pt) node [above] {};
  \draw[fill=blue](250:\radius) circle (2 pt) node [above] {};
  
  \node at (200:\radius + 0.2cm) {\tiny{$3$}} ;
  \node at (250:\radius + 0.2cm) {\tiny{$4$}} ;
 
  \draw[-, >=latex] ({180}:\radius) arc ({180}:{270}:\radius);

\draw[fill=black](250:-\radius/2) circle (1.5 pt) node [above] {};
\draw[fill=black](250:\radius/2) circle (1.5 pt) node [above] {};

\draw ({200}:\radius) -- node [pos=.75,above] {\small{$e_5$}} (250:\radius/2);
\draw ({250}:\radius) -- node [pos=.5,right] {\small{$e_4$}}  (250:\radius/2);
\draw (0,0) -- node [pos=.65,right] {\small{$e_6$}}  (250:\radius/2);
\draw (0,0) -- node [pos=.5,right] {\small{$e_3$}}  (70:\radius/2);

\end{tikzpicture}
\end{array}
$$
This local replacement encodes, with our previous notation, the action of $\frac{\partial}{\partial \phi_j}$ on $\mathrm{K}({\mathbf y})_{e_33}$. Similar observations hold for the following terms.

The next term
$$
	-\mathrm{K}(0)_{ e_3 3}
		\mathrm{K}(0)_{e_22}\mathrm{L}(\Phi)_{e_1e_2e_3e_4}
		\mathrm{K}({\mathbf y})_{1e_1}$$
has the graphical representation
$$
\begin{tikzpicture}
\def \n {4}
\def \radius {1.5cm}
\foreach \s in {1,...,\n}
{
\draw[fill=blue]({360/\n * (\s - 1)- 45}:\radius) circle (2 pt) node [above] {};
  \node at ({360/\n * (\s + 3)- 45}:\radius + 0.22cm) {\tiny{$\s$}} ;
  \draw[-, >=latex] ({360/\n * (\s - 1)}:\radius) 
    arc ({360/\n * (\s - 1)}:{360/\n * (\s)}:\radius);
}

 \draw[fill=black](0,0) circle (1.5 pt) node [above] {};

%\foreach \from/\to in {1/3}
%\draw ({360/\n * (\from - 1)}:\radius) -- ({360/\n * (\to - 1)}:\radius);

\foreach \from/\to in {1/3}
\draw({360/\n * (\from - 1) - 45}:\radius) -- node [pos=.8,below] {\small{$e_1$}}  (0,0);

\foreach \from/\to in {2/3}
\draw ({360/\n * (\from - 1) - 45}:\radius) -- node [pos=.8,right] {\small{$e_2$}} (0,0) ;

\foreach \from/\to in {3/3}
\draw ({360/\n * (\from - 1) - 45}:\radius) -- node [pos=.8,above] {\small{$e_3$}} (0,0);

\foreach \from/\to in {4/3}
\draw ({360/\n * (\from - 1) - 45}:\radius) -- node [pos=.85,left] {\small{$e_4$}} (0,0);

\end{tikzpicture}
$$
which is obtained from \eqref{3-graph} by adding a new external vertex leaf labeled $4$ and connecting it to the internal 3-vertex $\mathrm{L}(\Phi)_{e_1e_2e_3}$\\
$$
\begin{array}{ccc}
\begin{tikzpicture}
\def \radius {1.5cm}

\draw[fill=blue](200:\radius) circle (2 pt) node [above] {}; 
\draw[fill=blue](290:\radius) circle (2 pt) node [above] {};
   
  \node at (200:\radius + 0.2cm) {\tiny{$3$}} ;
  \node at (290:\radius + 0.2cm) {\tiny{$1$}} ;
 
  \draw[-, >=latex] ({180}:\radius) arc ({180}:{310}:\radius);

\draw[fill=black](250:\radius/2) circle (1.5 pt) node [above] {};

\draw ({200}:\radius) -- node [pos=.75,above] {\small{$e_3$}} (250:\radius/2);
\draw ({290}:\radius) -- node [pos=.85,below] {\small{$e_1$}}  (250:\radius/2);
\draw (0,0) -- node [pos=.8,right] {\small{$e_2$}}  (250:\radius/2);

\end{tikzpicture}
\quad & & \\[-1.5cm]
 & \xrightarrow{\makebox{ {\small{local replacement}} }}\ & \\[-1cm]
& &
\begin{tikzpicture}
\def \radius {1.5cm}

\draw[fill=blue](200:\radius) circle (2 pt) node [above] {}; 
\draw[fill=blue](250:\radius) circle (2 pt) node [above] {};
\draw[fill=blue](290:\radius) circle (2 pt) node [above] {};
   
  \node at (200:\radius + 0.2cm) {\tiny{$3$}} ;
  \node at (250:\radius + 0.2cm) {\tiny{$4$}} ;
  \node at (290:\radius + 0.2cm) {\tiny{$1$}} ;
 
  \draw[-, >=latex] ({180}:\radius) arc ({180}:{310}:\radius);

\draw[fill=black](250:\radius/2) circle (1.5 pt) node [above] {};

\draw ({200}:\radius) -- node [pos=.75,above] {\small{$e_3$}} (250:\radius/2);
\draw ({250}:\radius) -- node [pos=.75,left] {\small{$e_4$}} (250:\radius/2);
\draw ({290}:\radius) -- node [pos=.8,below] {\small{$e_1$}}  (250:\radius/2);
\draw (0,0) -- node [pos=.8,right] {\small{$e_2$}}  (250:\radius/2);

\end{tikzpicture}
\end{array}
$$
In this process, the 3-vertex, representing $\mathrm{L}(\Phi)_{e_1e_2e_3}$, is changed into a 4-vertex, standing for $\mathrm{L}(\Phi)_{e_1e_2e_3e_4}$
$$
\begin{array}{ccc}
\begin{tikzpicture}
\def \radius {1.cm}

\draw[fill=black](250:\radius/2) circle (1.5 pt) node [above] {};

\draw ({230}:\radius) -- node [pos=.3,above] {\small{$e_3$}} (250:\radius/2);
\draw ({290}:\radius) -- node [pos=.3,left] {\small{$e_1$}}  (250:\radius/2);
\draw (0,0) -- node [pos=.7,right] {\small{$e_2$}}  (250:\radius/2);

\end{tikzpicture}
\quad & & \\[-0.7cm]
 & \xrightarrow{\hspace{2cm}}\quad & \\[-0.8cm]
& &
\begin{tikzpicture}
\def \radius {1.cm}

\draw[fill=black](250:\radius/2) circle (1.5 pt) node [above] {};

\draw ({200}:\radius) -- node [pos=.65,above] {\small{$e_3$}} (250:\radius/2);
\draw ({250}:\radius) -- node [pos=.6,left] {\small{$e_4$}}  (250:\radius/2);
\draw ({305}:\radius) -- node [pos=.75,below] {\small{$e_1$}}  (250:\radius/2);
\draw (0,0) -- node [pos=.65,right] {\small{$e_2$}}  (250:\radius/2);

\end{tikzpicture}
\end{array}
$$
The diagrammatical representation of the third term 
$$\mathrm{K}(0)_{ e_3 3}
		\mathrm{K}(0)_{e_22}\mathrm{L}(\Phi)_{e_1e_2e_3}
		\mathrm{K}(0)_{e_5e_1}\mathrm{L}(\Phi)_{e_6e_5e_4}\mathrm{K}({\mathbf y})_{1e_6}$$
is similar to \eqref{4point2}
\begin{equation}
\label{4point3}
\begin{aligned}
\begin{tikzpicture}
\def \n {4}
\def \radius {1.5cm}
\foreach \s in {1,...,\n}
{
\draw[fill=blue]({360/\n * (\s - 1)- 45}:\radius) circle (2 pt) node [above] {};
  \node at ({360/\n * (\s + 3)- 45}:\radius + 0.22cm) {\tiny{$\s$}} ;
  \draw[-, >=latex] ({360/\n * (\s - 1)}:\radius) 
    arc ({360/\n * (\s - 1)}:{360/\n * (\s)}:\radius);
}

\foreach \from/\to in {1/3}
\draw ({360/\n * (\from - 1)- 45}:\radius) -- node [pos=.8,below] {\small{$e_6$}} ({360/\n * 3 - 45}:\radius/2);

\draw ({360/\n * 1 - 45}:\radius/2) -- node [pos=1.75,right] {\small{$e_5$}} (0,0);

 \draw[fill=black]({360/\n * 1 - 45}:\radius/2) circle (1.5 pt) node [above] {};

\foreach \from/\to in {2/3}
\draw ({360/\n * (\from - 1)- 45}:\radius) -- node [pos=.9,right] {\small{$e_2$}} ({360/\n * 1 - 45}:\radius/2) ;
 
\foreach \from/\to in {3/3}
\draw ({360/\n * (\from - 1)- 45}:\radius) -- node [pos=.8,above] {\small{$e_3$}} ({360/\n * 1 - 45}:\radius/2) ;

\draw ({360/\n * 3 - 45}:\radius/2) -- node [pos=1.75,left] {\small{$e_1$}} (0,0);

\foreach \from/\to in {4/3}
\draw ({360/\n * (\from - 1)- 45}:\radius) -- node [pos=.9,left] {\hspace{-0.3cm}\small{$e_4$}} ({360/\n * 3 - 45}:\radius/2);

\draw[fill=black]({360/\n * 3 - 45}:\radius/2) circle (1.5 pt) node [above] {};

\end{tikzpicture}
\end{aligned}
\end{equation}
It is obtained from \eqref{3-graph} by adding locally an external vertex leaf labeled $4$ together with an internal 3-vertex $\mathrm{L}(\Phi)_{e_6e_5e_4}$\\
$$
\begin{array}{ccc}
\begin{tikzpicture}
\def \radius {1.5cm}
{
\draw[fill=blue]({0 - 60}:\radius) circle (2 pt) node [above] {};
  \node at ({0 - 60}:\radius + 0.22cm) {\tiny{$1$}} ;
  \draw[-, >=latex] (1.5,0) arc ({360}:{260}:\radius);
}

 \draw[fill=black](0,0) circle (1.5 pt) node [above] {};

\draw({0 - 60}:\radius) -- node [pos=.8,left] {\small{$e_1$}}  (0,0);
\draw({0 - 98}:-\radius/4) -- node [right] {\small{$e_2$}}  (0,0);
\draw({0 - 3}:-\radius/4) -- node [above] {\small{$e_3$}}  (0,0);

\end{tikzpicture}
\quad & & \\[-1.5cm]
 & \xrightarrow{\makebox{ {\small{local replacement}} }}\ & \\[-1.7cm]
& &
\begin{tikzpicture}
\def \radius {1.5cm}

  \draw[fill=blue](280:\radius) circle (2 pt) node [above] {};
  \draw[fill=blue](350:\radius) circle (2 pt) node [above] {};
  
  \node at (280:\radius + 0.2cm) {\tiny{$4$}} ;
  \node at (350:\radius + 0.2cm) {\tiny{$1$}} ;
 
  \draw[-, >=latex] (1.5,0) arc ({360}:{260}:\radius);

\draw[fill=black](0.55,0.45) circle (1.5 pt) node [above] {};
\draw[fill=black](300:\radius/2) circle (1.5 pt) node [above] {};

\draw ({280}:\radius) -- node [pos=.8,left] {\small{$e_4$}} (300:\radius/2);
\draw ({350}:\radius) -- node [pos=.75,below] {\small{$e_6$}}  (300:\radius/2);
\draw (0.5,0) -- node [pos=.5,left] {\small{$e_5$}}  (300:\radius/2);
\draw (0.5,0) -- node [pos=.5,right] {\small{$e_1$}}  (40:\radius/2);

\end{tikzpicture}
\end{array}
$$

We can generalise these example computations to a set of general rules. 

Notice first that indices of the dummy variables can be selected arbitrarily in our previous computations and play no role. For example, the following graph has exactly the same meaning as \ref{4point3}:
\begin{equation}
\label{4point32}
\begin{aligned}
\begin{tikzpicture}
\def \n {4}
\def \radius {1.5cm}
\foreach \s in {1,...,\n}
{
\draw[fill=blue]({360/\n * (\s - 1)- 45}:\radius) circle (2 pt) node [above] {};
  \node at ({360/\n * (\s + 3)- 45}:\radius + 0.22cm) {\tiny{$\s$}} ;
  \draw[-, >=latex] ({360/\n * (\s - 1)}:\radius) 
    arc ({360/\n * (\s - 1)}:{360/\n * (\s)}:\radius);
}

\foreach \from/\to in {1/3}
\draw ({360/\n * (\from - 1)- 45}:\radius) -- node [pos=.8,below] {\small{$e_2$}} ({360/\n * 3 - 45}:\radius/2);

\draw ({360/\n * 1 - 45}:\radius/2) -- node [pos=1.75,right] {\small{$e_4$}} (0,0);

 \draw[fill=black]({360/\n * 1 - 45}:\radius/2) circle (1.5 pt) node [above] {};

\foreach \from/\to in {2/3}
\draw ({360/\n * (\from - 1)- 45}:\radius) -- node [pos=.9,right] {\small{$e_1$}} ({360/\n * 1 - 45}:\radius/2) ;
 
\foreach \from/\to in {3/3}
\draw ({360/\n * (\from - 1)- 45}:\radius) -- node [pos=.8,above] {\small{$e_3$}} ({360/\n * 1 - 45}:\radius/2) ;

\draw ({360/\n * 3 - 45}:\radius/2) -- node [pos=1.75,left] {\small{$e_5$}} (0,0);

\foreach \from/\to in {4/3}
\draw ({360/\n * (\from - 1)- 45}:\radius) -- node [pos=.9,left] {\hspace{-0.3cm}\small{$e_6$}} ({360/\n * 3 - 45}:\radius/2);

\draw[fill=black]({360/\n * 3 - 45}:\radius/2) circle (1.5 pt) node [above] {};

\end{tikzpicture}
\end{aligned}
\end{equation}
It follows that these variables can be safely omitted from graphical representations without altering the meaning of the graphical encoding of functionals. 

\begin{definition}
\label{def:graphicalrules}
A generic admissible tree $T$ is defined to be a planar rooted tree 
 \begin{itemize}
 \item with a root decorated by $1$ and $n-1$ leaves decorated from $2$ to $n$,
 \item which can be inscribed in a disk, with the leaves located on the boundary of the disk and the decorations in the natural anti-clockwise order,
 \item with internal vertices in the interior of the disk and of arity at least $3$.
 \end{itemize}
Given a word $w=i_1\cdots i_n$ with letters in $[n]$, a $w$-admissible tree is a generic admissible tree in which the decorations $1,\dots ,n$ are replaced by $i_1,\dots,i_n$. The set of generic admissible trees is denoted $\mathcal{D}$, the set of $w$-admissible trees $\mathcal{D}_w$
\end{definition}

\begin{remark}Recall that a Schr\"oder tree is a planar rooted tree whose internal vertices have at least two descendants. As a Schr\"oder tree can be inscribed in a disk and its leaves labelled in the natural anti-clockwise order, the set of generic admissible trees is in bijection with the set $ST$ of Schr\"oder trees and the set of $w$-admissible trees in bijection with the set $ST(n)$ of Schr\"oder trees with $n-1$ leaves.
\end{remark}

Notice that if one erases in a generic admissible tree with $n-1$ leaves the leaf labelled $n$  and its outcoming edge (and the terminal vertex of this edge if and only if this vertex is of arity 3), one is left with a generic admissible tree with $n-2$ leaves.

The key observation of our following developments is that the action of planar derivatives $\frac{\partial}{\partial \phi_i}$ translates into a recursive process of generation of admissible trees, so that there is a perfect match between
\begin{itemize}
\item the recursive computation of the $n$ point connected correlation functionals in terms of 2-point correlation functions and functionals, and
\item the generation of admissible trees with $n-1$ leaves from admissible trees with $n-2$ leaves.
\end{itemize}

We first describe the recursive generation of admissible graphs.
For notational simplicity we consider hereafter only generic admissible trees --- the same reasoning would apply in general.
\begin{enumerate}
\item 
If $n=2$, then \eqref{2point} is the only tree.

\item
If $n>2$, the tree $T$ can be obtained from an admissible tree $T'$ with $n-2$ leaves as follows: consider the lowest path joining leaf $1$ to leaf $n-1$. This path is called the characteristic path.

$$
\begin{tikzpicture}
\def \radius {1.5cm}

\draw[fill=blue](200:\radius) circle (2 pt) node [above] {}; 
\draw[fill=blue](330:\radius) circle (2 pt) node [above] {};
   
  \node at (200:\radius + 0.45cm) {\tiny{$n-1$}} ;
  \node at (330:\radius + 0.2cm) {\tiny{$1$}} ;
 
  \draw[-, >=latex] ({180}:\radius) arc ({180}:{380}:\radius);

\draw[fill=black](0.1,-0.2) circle (1.5 pt) node [below] {$i_2$};
\draw[fill=black](-0.8,-0.2) circle (1.5 pt) node [below] {$i_1$};
\draw[fill=black](0.8,-0.4) circle (1.5 pt) node [below] {$i_3$};

\draw[thick] ({200}:\radius) -- (-0.8,-0.2);
\draw[thick] ({330}:\radius) -- (0.8,-0.4);
\draw[thick] (0.1,-0.2) -- (-0.8,-0.2);
\draw[thick] (0.1,-0.2) -- (0.8,-0.4);

\draw (0.1,-0.2) -- (90:\radius/3);
\draw (0.1,-0.2) -- (70:\radius/3);
\draw (0.1,-0.2) -- (50:\radius/3);

\draw (-0.8,-0.2) -- (-0.6,0.3);
\draw (-0.8,-0.2) -- (-1,0.3);
\draw (-0.8,-0.2) -- (-1.2,0.3);
\draw (-0.8,-0.2) -- (-0.8,0.3);

\draw (0.8,-0.4) -- (0.8,0.1);
\draw (0.8,-0.4) -- (1.3,0.);
%\draw (0.8,-0.4) -- (1.2,0.1);

\end{tikzpicture}
$$

The tree $T$ is then obtained from $T'$ either by

\begin{itemize}

\item adding a leaf decorated with $n$ and joining it to one of the internal vertices $i_1$, $i_2$, ... $i_p$. For instance
$$
\begin{tikzpicture}
\def \radius {1.5cm}

\draw[fill=blue](200:\radius) circle (2 pt) node [above] {}; 
\draw[fill=blue](270:\radius) circle (2 pt) node [above] {};
\draw[fill=blue](330:\radius) circle (2 pt) node [above] {};
   
  \node at (200:\radius + 0.5cm) {\tiny{$n-1$}} ;
  \node at (270:\radius + 0.2cm) {\tiny{$n$}} ;
  \node at (330:\radius + 0.2cm) {\tiny{$1$}} ;
 
  \draw[-, >=latex] ({180}:\radius) arc ({180}:{360}:\radius);

\draw[fill=black](0.1,-0.2) circle (1.5 pt);
\draw[fill=black](-0.8,-0.2) circle (1.5 pt) node [below] {$i_1$};
\draw[fill=black](0.8,-0.4) circle (1.5 pt) node [below] {$i_3$};

\draw ({200}:\radius) -- (-0.8,-0.2);
\draw ({270}:\radius) -- node [pos=.81,left] {$i_2$} (0.1,-0.2);
\draw ({330}:\radius) -- (0.8,-0.4);
\draw (0.1,-0.2) -- (-0.8,-0.2);
\draw (0.1,-0.2) -- (0.8,-0.4);

\draw (0.1,-0.2) -- (90:\radius/3);
\draw (0.1,-0.2) -- (70:\radius/3);
\draw (0.1,-0.2) -- (50:\radius/3);

\draw (-0.8,-0.2) -- (-0.6,0.3);
\draw (-0.8,-0.2) -- (-1,0.3);
\draw (-0.8,-0.2) -- (-1.2,0.3);
\draw (-0.8,-0.2) -- (-0.8,0.3);
%\draw (-0.8,-0.2) -- (-0.9,0.4);

\draw (0.8,-0.4) -- (0.8,0.1);
\draw (0.8,-0.4) -- (1.3,0.);
%\draw (0.8,-0.4) -- (1.2,0.1);

\end{tikzpicture}
$$
In that case, the new characteristic path would be
$$
\begin{tikzpicture}
\def \radius {1.5cm}

\draw[fill=blue](220:\radius) circle (2 pt) node [above] {};
\draw[fill=blue](330:\radius) circle (2 pt) node [above] {};
   
  \node at (220:\radius + 0.2cm) {\tiny{$n$}} ;
  \node at (330:\radius + 0.2cm) {\tiny{$1$}} ;
 
  \draw[-, >=latex] ({180}:\radius) arc ({180}:{360}:\radius);

\draw[fill=black](-0.5,-0.2) circle (1.5 pt) node [below] {$i_2$};
\draw[fill=black](0.8,-0.4) circle (1.5 pt) node [below] {$i_3$};
\draw[fill=black](-1.1,0.39) circle (1.5 pt) node [below] {$i_1$};

\draw[thick] ({220}:\radius) -- (-0.5,-0.2);
\draw[thick] ({330}:\radius) -- (0.8,-0.4);
\draw[thick] (-0.5,-0.2) -- (0.8,-0.4);

\draw (-0.5,-0.2) -- (160:\radius/1.3);
\draw (-0.5,-0.2) -- (150:\radius/1.5);
\draw (-0.5,-0.2) -- (145:\radius/2);
\draw (-0.5,-0.2) -- (140:\radius/2.8);

\draw (0.8,-0.4) -- (0.8,0.1);
\draw (0.8,-0.4) -- (1.3,0.);
%\draw (0.8,-0.4) -- (1.2,0.1);

\draw (-1.1,0.39) -- (-1.1,0.59);
\draw (-1.1,0.39) -- (-1.,0.59);
\draw (-1.1,0.39) -- (-1.25,0.6);
\draw (-1.1,0.39) -- (-1.4,0.5);
\draw (-1.1,0.39) -- (-1.45,0.4);

\end{tikzpicture}
$$

\item selecting an edge of the characteristic path and making the following substitution
$$
\begin{array}{ccc}
\begin{tikzpicture}
\def \radius {1.5cm}

\draw[fill=blue](200:\radius) circle (2 pt) node [above] {}; 
\draw[fill=blue](330:\radius) circle (2 pt) node [above] {};
   
  \node at (200:\radius + 0.5cm) {\tiny{$n-1$}} ;
  \node at (330:\radius + 0.2cm) {\tiny{$1$}} ;
 
  \draw[-, >=latex] ({180}:\radius) arc ({180}:{360}:\radius);

\draw[fill=black](0.1,-0.2) circle (1.5 pt) node [above] {$i_2$};
\draw[fill=black](-0.8,-0.2) circle (1.5 pt) node [above] {$i_1$};
\draw[fill=black](0.8,-0.4) circle (1.5 pt) node [above] {$i_3$};

\draw[thick] ({200}:\radius) -- (-0.8,-0.2);
\draw[thick] ({330}:\radius) -- (0.8,-0.4);
\draw[thick] (0.1,-0.2) -- node [above] {$\textcolor{red}{\downarrow}$} (-0.8,-0.2);
\draw[thick] (0.1,-0.2) -- (0.8,-0.4);

\end{tikzpicture}
\quad & & \\[-1.5cm]
 & \xrightarrow{\makebox{ {\small{substitution}} }}\ & \\[-1.3cm]
& &
\begin{tikzpicture}
\def \radius {1.5cm}

\draw[fill=blue](200:\radius) circle (2 pt) node [above] {}; 
\draw[fill=blue](270:\radius) circle (2 pt) node [above] {};
\draw[fill=blue](330:\radius) circle (2 pt) node [above] {};
   
  \node at (200:\radius + 0.4cm) {\tiny{$n-1$}} ;
  \node at (270:\radius + 0.2cm) {\tiny{$n$}} ;
  \node at (330:\radius + 0.2cm) {\tiny{$1$}} ;
 
  \draw[-, >=latex] ({180}:\radius) arc ({180}:{360}:\radius);

\draw[fill=black](-0.8,-0.2) circle (1.5 pt) node [above] {$i_1$};
\draw[fill=black](-0.3,-.8) circle (1.5 pt) node [right] {$i_{\text{new}}$};
\draw[fill=black](0.1,-0.2) circle (1.5 pt) node [above] {$i_2$};
\draw[fill=black](0.8,-0.4) circle (1.5 pt) node [above] {$i_3$};

\draw ({200}:\radius) -- (-0.8,-0.2);
\draw ({270}:\radius) -- (-0.3,-.8) ;
\draw ({330}:\radius) -- (0.8,-0.4);
\draw (-0.8,-0.2)  -- (-0.3,-.8);
\draw (-0.3,-.8) -- (0.1,-0.2);
\draw (0.1,-0.2) -- (0.8,-0.4) ;

\end{tikzpicture}
\end{array}
$$ 
In that case, the new characteristic path becomes
$$
\begin{tikzpicture}
\def \radius {1.5cm}

\draw[fill=blue](230:\radius) circle (2 pt) node [above] {}; 
\draw[fill=blue](330:\radius) circle (2 pt) node [above] {};
   
  \node at (230:\radius + 0.2cm) {\tiny{$n$}} ;
  \node at (330:\radius + 0.2cm) {\tiny{$1$}} ;
 
  \draw[-, >=latex] ({180}:\radius) arc ({180}:{360}:\radius);

\draw[fill=black](-0.8,-0.5) circle (1.5 pt) node [above] {$i_{\text{new}}$};
\draw[fill=black](0.1,-0.2) circle (1.5 pt) node [above] {$i_2$};
\draw[fill=black](0.8,-0.4) circle (1.5 pt) node [above] {$i_3$};

\draw[thick] ({230}:\radius) -- (-0.8,-0.5);
\draw[thick] ({330}:\radius) -- (0.8,-0.4);
\draw[thick] (-0.8,-0.5) -- (0.1,-0.2);
\draw[thick] (0.1,-0.2) -- (0.8,-0.4) ;

\end{tikzpicture}
$$
\end{itemize}
\end{enumerate}

\medskip

With this set of diagrammatical operations in place, we define so-called Feynman rules, seen as a map $F$, which sends an admissible tree $T \in \mathcal{D}$ to a monomial of generating functions. We detail the generic case but the same process applies to word-decorated admissible trees.

Consider for instance the admissible $T \in \mathcal{D}$ with 4 leaves and two internal 1PI vertices of degree 3
\begin{equation}
\label{FR: step0}
\begin{aligned}
\begin{tikzpicture}
\def \n {4}
\def \radius {1.5cm}
\foreach \s in {1,...,\n}
{
\draw[fill=blue]({360/\n * (\s - 1)- 45}:\radius) circle (2 pt) node [above] {};
  \node at ({360/\n * (\s + 3)- 45}:\radius + 0.22cm) {\tiny{$\s$}} ;
  \draw[-, >=latex] ({360/\n * (\s - 1)}:\radius) 
    arc ({360/\n * (\s - 1)}:{360/\n * (\s)}:\radius);
}

 %\draw[fill=black](0,0) circle (1.5 pt) node [above] {};

%\foreach \from/\to in {1/3}
%\draw ({360/\n * (\from - 1)}:\radius) -- ({360/\n * (\to - 1)}:\radius);

\foreach \from/\to in {1/3}
\draw ({360/\n * (\from - 1)- 45}:\radius) -- ({270}:\radius/2);

\draw ({90}:\radius/2) --  (0,0);

 \draw[fill=black]({90}:\radius/2) circle (1.5 pt) node [above] {};
 
\foreach \from/\to in {2/3}
\draw ({360/\n * (\from - 1)- 45}:\radius) --  ({90}:\radius/2) ;

\foreach \from/\to in {3/3}
\draw ({360/\n * (\from - 1)- 45}:\radius) -- ({90}:\radius/2) ;

\draw ({270}:\radius/2) --  (0,0);

\foreach \from/\to in {4/3}
\draw ({360/\n * (\from - 1)- 45}:\radius) -- ({270}:\radius/2);

\draw[fill=black]({270}:\radius/2) circle (1.5 pt) node [above] {};

\end{tikzpicture}
\end{aligned}
\end{equation}

Then we apply the following 3-steps procedure 

\begin{itemize}

\item Step 1: Assign (arbitrary, distinct) variables to all outgoing edges of the internal vertices. In the  example tree displayed in  \eqref{FR: step0}

\begin{equation}
\label{FR: step1}
\begin{aligned}
\begin{tikzpicture}
\def \n {4}
\def \radius {1.5cm}
\foreach \s in {1,...,\n}
{
\draw[fill=blue]({360/\n * (\s - 1)- 45}:\radius) circle (2 pt) node [above] {};
  \node at ({360/\n * (\s + 3)- 45}:\radius + 0.22cm) {\tiny{$\s$}} ;
  \draw[-, >=latex] ({360/\n * (\s - 1)}:\radius) 
    arc ({360/\n * (\s - 1)}:{360/\n * (\s)}:\radius);
}

\foreach \from/\to in {1/3}
\draw ({360/\n * (\from - 1)- 45}:\radius) -- node [pos=.5,above] {\hspace{-0.3cm}\small{$e_1$}}  ({270}:\radius/2);

\draw ({90}:\radius/2) -- node [pos=.75,above] {\hspace{0.5cm}\small{$e_3$}}  (0,0);

 \draw[fill=black]({90}:\radius/2) circle (1.5 pt) node [above] {};
 
\foreach \from/\to in {2/3}
\draw ({360/\n * (\from - 1)- 45}:\radius) -- node [pos=.8,above] {\small{$e_4$}}   ({90}:\radius/2) ;
 
\foreach \from/\to in {3/3}
\draw ({360/\n * (\from - 1)- 45}:\radius) -- node [pos=.7,below] {\small{$e_5$}}  ({90}:\radius/2) ;

\draw ({270}:\radius/2) -- node [pos=.2,left] {\small{$e_2$}} (0,0);

\foreach \from/\to in {4/3}
\draw ({360/\n * (\from - 1)- 45}:\radius) -- node [pos=.95,below] {\hspace{-0.1cm}\small{$e_6$}} ({270}:\radius/2);

\draw[fill=black]({270}:\radius/2) circle (1.5 pt) node [above] {};

\end{tikzpicture}
\end{aligned}
\end{equation}

\item Step 2: 
\begin{itemize}

\item To each internal vertex of order $k$ on the characteristic path of $T$, assign a 1PI correlation functional ${\mathrm{L}}(\Phi)_{e_{i_1} \cdots e_{i_k}}$. E.g., to the order-3 vertex  

$$
\begin{tikzpicture}
\def \radius {1.cm}
\draw[fill=black](250:\radius/2) circle (1.5 pt) node [above] {};
\draw ({230}:\radius) -- node [pos=.3,above] {\small{$e_2$}} (250:\radius/2);
\draw ({290}:\radius) -- node [pos=.3,left] {\small{$e_6$}}  (250:\radius/2);
\draw (0,0) -- node [pos=.7,right] {\small{$e_1$}}  (250:\radius/2);
\end{tikzpicture}
$$
in the characteristic path of \eqref{FR: step1} we assign the order-3 1PI vertex ${\mathrm{L}}(\Phi)_{e_1e_2e_6}$. Here, $e_1$ is on the edge joining the root (labelled 1) to the order-3 1PI vertex, and the indexing of ${\mathrm{L}}(\Phi)$ is done according to the anti-clockwise ordering of the labels of the vertex outgoing edges.

\item To each internal vertex of order $k$, which is not on the characteristic path of $T$, assign a 1PI correlation function ${\mathrm{L}}(0)_{e_{i_1} \cdots e_{i_k}} \in \mathbb{K}$. For instance, if $e_1$ is on the path from the root to the order-3 vertex\\
$$
\begin{array}{ccc}
\begin{tikzpicture}
\def \radius {1.cm}
\draw[fill=black](250:\radius/2) circle (1.5 pt) node [above] {};
\draw ({230}:\radius) -- node [pos=.3,above] {\small{$e_3$}} (250:\radius/2);
\draw ({290}:\radius) -- node [pos=.3,left] {\small{$e_1$}}  (250:\radius/2);
\draw (0,0) -- node [pos=.7,right] {\small{$e_2$}}  (250:\radius/2);
\end{tikzpicture}
\quad & & \\[-1cm]
 & \xrightarrow{\phantom{mmmm}} \ & \\[-0.5cm]
& &
\mathrm{L}(0)_{e_1e_2e_3} \in \mathbb{K}\\[0.1cm]
\begin{tikzpicture}
\def \radius {1.5cm}
\draw[fill=blue](230:\radius) circle (2 pt) node [above] {}; 
\draw[fill=blue](330:\radius) circle (2 pt) node [above] {};
   
  \node at (230:\radius + 0.2cm) {\tiny{$n$}} ;
  \node at (330:\radius + 0.2cm) {\tiny{$1$}} ;
 
  \draw[-, >=latex] ({180}:\radius) arc ({180}:{390}:\radius);

\draw[fill=black](-0.8,-0.5) circle (1.5 pt) node [above] {$i_{1}$};
\draw[fill=black](0.1,-0.2) circle (1.5 pt) node [above] {$i_2$};
\draw[fill=black](0.8,-0.4) circle (1.5 pt) node [above] {$i_3$};

\draw[thick] ({230}:\radius) -- (-0.8,-0.5);
\draw[thick] ({330}:\radius) -- (0.8,-0.4);
\draw[thick] (-0.8,-0.5) -- (0.1,-0.2);
\draw[thick] (0.1,-0.2) -- (0.8,-0.4) ;
\end{tikzpicture} & &
\end{array}
$$

\smallskip

\item
To each edge not on the characteristic path assign a $\mathbb{K}$-valued 2-cumulant, e.g.,
$$
\begin{array}{ccc}
\begin{tikzpicture}
\def \radius {1.5cm}
\draw[fill=blue](30:\radius) circle (2 pt) node [above] {};
\node at (30:\radius + 0.2cm) {\tiny{$2$}} ;
\draw[fill=black](0.8,0.4) circle (1.5 pt) node [above] {$e_1$};
\draw ({30}:\radius) -- (0.8,0.4);
\end{tikzpicture}
 & \xrightarrow{\phantom{mmmm}} \ & \\[-0.5cm]
& &
\mathrm{K}(0)_{e_12} \in \mathbb{C}\\[0.1cm]
\end{array}
$$
In general, the first lower index of the 2-cumulant should correspond to the dummy variable closest to the root.

\smallskip

\item
To each edge on the characteristic path assign the 2 point correlated correlation functional
$$
\begin{array}{ccc}
\begin{tikzpicture}
\def \radius {1.5cm}
\draw[fill=blue](330:\radius) circle (2 pt) node [above] {};
\node at (330:\radius + 0.2cm) {\tiny{$1$}} ;
\draw[fill=black](0.8,-0.4) circle (1.5 pt) node [right] {$e_1$};
\draw ({330}:\radius) -- (0.8,-0.4);
\end{tikzpicture}
 & & \\[-1cm]
 & \xrightarrow{\phantom{mmmm}} \ & \\[-.5cm]
& &
\mathrm{K}(\mathbf{y})_{1e_1} \\[0.1cm]
\end{array}
$$
In general, the first lower index of the 2 point functional should correspond to the label closest to the root on the graph.
\end{itemize}

\item Step 3: Take the product in such a way that the edges and vertices on the characteristic path appear from left to right. As the other terms are scalar-valued, they commute and can be arranged in any order.

\item Step 4: Multiply this product by $(-1)^p$, where $p$ is the number of internal vertices in the graph.
\end{itemize}

We let the reader put together our various arguments and constructions, and check that the Feynman rules provide a dictionary between 
\begin{itemize}
\item admissible graphs and components of the expansion of n point connected correlation functionals in terms of 2-point correlation functions and functionals and $k$-point 1PI correlation functions and functionals,
\item the action of the $K({\mathbf y})_{y_k}\frac{\partial}{\partial \phi_k}$ on products of connected and 1PI correlation functionals and the recursive generating process of admissible graphs we have described.
\end{itemize}
It follows that:

\begin{theorem}
\label{thm:treeexpansion}
Let $w= i_1\cdots i_n$ and $\mathcal{D}_w$ be the set of $w$-admissible trees. We denote the Feynman rules described above by the symbol $F: \mathcal{D}_w \to R_{\mathbf{X}}$ seen as a map sending decorated trees to non-commutative generating functions. Then
\begin{equation}
\label{Feynman}
	\mathrm{K}(\mathbf{y})_{y_{i_1} \cdots y_{i_n}}
	=\sum_{T \in \mathcal{D}_n} F(T).
\end{equation}
\end{theorem}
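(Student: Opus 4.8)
The plan is to argue by induction on $n$, the number of derivations, realising each step of the recursion as the action of a single planar $\phi$-derivative via the chain rule \eqref{y2phi} and the three-point relation \eqref{3point}, and then matching that action term by term with the recursive generation of admissible trees described after Definition~\ref{def:graphicalrules}. For the base case $n=2$, equation \eqref{2pointfunction} gives $\mathrm{K}(\mathbf{y})_{y_{i_1}y_{i_2}}$, which is exactly $F$ evaluated on the unique tree \eqref{2point}: a single characteristic-path edge carrying $\mathrm{K}(\mathbf{y})$, no internal vertex, so the sign $(-1)^0$ is trivial. This is the only element of $\mathcal{D}_w$ when $n=2$.

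For the inductive step I would assume the statement for $w'=i_1\cdots i_{n-1}$, so that $\mathrm{K}(\mathbf{y})_{y_{i_1}\cdots y_{i_{n-1}}}=\sum_{T'\in\mathcal{D}_{w'}}F(T')$ is an honest functional. Writing $\mathrm{K}(\mathbf{y})_{y_{i_1}\cdots y_{i_n}}=\frac{\partial}{\partial y_{i_n}}\mathrm{K}(\mathbf{y})_{y_{i_1}\cdots y_{i_{n-1}}}$ and applying \eqref{y2phi} once to this already-evaluated functional (so that no higher-order $\phi$-derivatives intervene and the identity is exact) yields
$$
\mathrm{K}(\mathbf{y})_{y_{i_1}\cdots y_{i_n}}
= \mathrm{K}(\mathbf{y})_{y_j i_n}\,\frac{\partial}{\partial \phi_j}\sum_{T'\in\mathcal{D}_{w'}}F(T').
$$
The heart of the argument is to expand $\frac{\partial}{\partial\phi_j}F(T')$ by the planar Leibniz rule \eqref{Leibniz}. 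By the Feynman rules, each $F(T')$ is a product of scalars of type $\mathrm{K}(0)$ and $\mathrm{L}(0)$ (the off-path edges and vertices) and of genuine functionals of type $\mathrm{K}(\mathbf{y})$ and $\mathrm{L}(\Phi)$ (the edges and vertices on the characteristic path), the latter arranged from left to right so that the last leaf sits on the left and the root on the right. Hence $\frac{\partial}{\partial\phi_j}$ annihilates every scalar factor and differentiates exactly one characteristic-path factor in each Leibniz summand; by the planar form of \eqref{Leibniz}, every characteristic-path factor to the \emph{left} of the differentiated one (that is, nearer the last leaf) is replaced by its constant term, turning each such $\mathrm{K}(\mathbf{y})$ into a $\mathrm{K}(0)$ and each such $\mathrm{L}(\Phi)$ into an $\mathrm{L}(0)$.

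I would then identify the two possible outcomes with the two generating moves. When $\frac{\partial}{\partial\phi_j}$ hits a characteristic-path vertex $\mathrm{L}(\Phi)$, its arity is raised by the index $\phi_j$, and the subsequent left-multiplication by $\mathrm{K}(\mathbf{y})_{y_j i_n}$ attaches the new last leaf $i_n$ to that vertex while the portion of the path beyond it collapses to scalars; this is precisely the vertex-attachment move, which leaves the number $p$ of internal vertices unchanged and so preserves the sign $(-1)^p$. When $\frac{\partial}{\partial\phi_j}$ hits a characteristic-path edge $\mathrm{K}(\mathbf{y})$, the relation \eqref{3point} splits it into a root-side functional $\mathrm{K}(\mathbf{y})$, a new three-valent vertex $\mathrm{L}(\Phi)$, and a last-leaf-side scalar $\mathrm{K}(0)$, with an overall sign $-1$, after which left-multiplication by $\mathrm{K}(\mathbf{y})_{y_j i_n}$ attaches $i_n$ to the new vertex; this is precisely the edge-subdivision move, and the extra internal vertex changes $(-1)^p$ by exactly the sign supplied by \eqref{3point}. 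In both cases the resulting monomial is $F(T)$ for the tree $T$ produced by the corresponding move.

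It then remains to verify that $(T',\text{differentiated characteristic-path factor})\mapsto T$ is a bijection onto $\mathcal{D}_w$, and this combinatorial check is where I expect the main difficulty to lie. The inverse should be the leaf-erasure already indicated after the Remark following Definition~\ref{def:graphicalrules}: in any $T\in\mathcal{D}_w$ the edge to the largest label $i_n$ is the terminal edge of the characteristic path, so its endpoint $v$ lies on that path; deleting $i_n$ and its edge, and suppressing $v$ exactly when it becomes two-valent, should return a well-defined $T'\in\mathcal{D}_{w'}$ together with the distinguished factor (the vertex $v$ if it survives, the merged edge if $v$ is suppressed). The work is to confirm that this erasure is well-defined, lands in $\mathcal{D}_{w'}$ with the correct characteristic path (the unique root-to-$i_{n-1}$ path), and is a two-sided inverse to the two moves; here planarity and the anti-clockwise ordering of the boundary labels are what guarantee that reattaching $i_n$ as the new last leaf reproduces $T$ uniquely. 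Granting this bijection, summing the Leibniz expansion over all $T'\in\mathcal{D}_{w'}$ and all characteristic-path factors gives $\sum_{T\in\mathcal{D}_w}F(T)$, which is the desired identity \eqref{Feynman}.
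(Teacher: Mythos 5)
Your proposal is correct and takes essentially the same route as the paper: the paper also obtains the theorem by matching the action of $\mathrm{K}(\mathbf{y})_{y_j y_{i_n}}\frac{\partial}{\partial\phi_j}$ (chain rule \eqref{y2phi}, planar Leibniz rule \eqref{Leibniz}, and the three-point identity \eqref{3point}) against the two characteristic-path moves (vertex attachment and edge subdivision), with the sign tracked by $(-1)^p$ exactly as you describe. The paper explicitly leaves this assembly to the reader, so your induction --- including the exactness observation at each step and the leaf-erasure inverse for the bijection --- is a more detailed rendering of the paper's own sketched argument.
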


%%%%%%%%%%%%%%%%%%

\subsection{$0$-dimensional calculus}
\label{ssec:0dimcalc}

As a final remark, we consider the relations implied by the non-commutative Legendre transformation \eqref{Legendre5a} in the single-letter case $\mathbf{X}=\{x\}$. From the point of view of free probability, this corresponds to the univariate case --- the study of the distribution of a single non-commutative random variable by the properties of related generating series. In the physics literature, this is referred to as the 0-dimensional case. 

In this case generating series are commutative, which simplifies considerably the computations. The order-2 computation evaluated at $\mathbf{y}=0$ gives
\begin{equation}
\label{order2}
	\ell^{(2)} = (\mathrm{k}^{(2)})^{-1}.
\end{equation}
At higher orders we find
\begin{align}
	\mathrm{k}^{(3)} (\mathrm{k}^{(2)})^{-3} 
	&= -\ell^{(3)} 											\label{order3} \\
	\mathrm{k}^{(4)} (\mathrm{k}^{(2)})^{-4} 
	&= - \ell^{(4)} 
		+  2 \ell^{(3)}\mathrm{k}^{(2)} \ell^{(3)} 					\label{order4}\\
	\mathrm{k}^{(5)} (\mathrm{k}^{(2)})^{-5} 
	&= -\ell^{(5)} 
		-  5 \ell^{(4)}\mathrm{k}^{(2)} \ell^{(3)} 					\label{order5}\\		
	\mathrm{k}^{(6)} (\mathrm{k}^{(2)})^{-6} 
	&= - \ell^{(6)} 
		+ 6 \ell^{(5)}\mathrm{k}^{(2)} \ell^{(3)} 
		+ 6 \ell^{(4)}\mathrm{k}^{(2)}\ell^{(4)} 	
		- 14 \ell^{(3)}\mathrm{k}^{(2)}\ell^{(3)}\mathrm{k}^{(2)}\ell^{(3)} 	\label{order6}
\end{align}
From a physics point of view, these relations describe connected $n$-point Green's functions in terms of 1PI Green's functions (linked via propagators). Compare with the relations given in \cite[p.~43]{Brezin}. The precise nature of these relations from the viewpoint of free probability and non-crossing partitions will be discussed elsewhere. 

In general, it follows from the definition of admissible trees that the terms in the previous expansions can be obtained from the enumeration of primed Schr\"oder trees with $n$ leaves $PST(n)$. We get
\begin{proposition}
We have, for univariate free cumulants of order $n>2$:
\begin{equation}
	\mathrm{k}^{(n)} (\mathrm{k}^{(2)})^{-n}
	= \sum_{t \in PST(n-1)} (\mathrm{k}^{(2)})^{\text{e}(t)}\prod\limits_{v\in \text{vert}(t)}(-\ell^{(\text{ar}(v))}),
\end{equation}
where $\text{e}(t)$ is the number of edges connecting internal vertices in the tree, $\text{vert}(t)$ the set of internal vertices and $\text{ar}(v)$ the arity of the internal vertex $v$ .
\end{proposition}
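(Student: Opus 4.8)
The plan is to obtain the Proposition as the single-letter specialisation of Theorem \ref{thm:treeexpansion}, evaluated at the origin. First I would set $\mathbf{X}=\{x\}$, so that all generating series become commutative and every leaf decoration collapses to the single letter $x$. In this situation the planar derivative strips one factor at a time, $\frac{\partial}{\partial x}(x^m)=x^{m-1}$, so that no factorials are produced and $\frac{\partial^n}{\partial y^n}\mathrm{K}(y)\big|_{y=0}=\mathrm{k}^{(n)}$; likewise the $2$-cumulant functional reduces to $\mathrm{k}^{(2)}$ and the $k$-th $\phi$-derivative of $\mathrm{L}$ at the origin to $\ell^{(k)}$. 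Since the centred hypothesis makes $\Phi$ a coordinate change with no constant term, $\mathbf{y}=0$ is equivalent to $\Phi=0$, so the functional identity \eqref{Feynman} of Theorem \ref{thm:treeexpansion} may be evaluated consistently at the origin on both sides.

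Next I would read off the value of a single Feynman monomial $F(T)$ at the origin. By Step~2 of the Feynman rules every edge --- whether it carries the functional $\mathrm{K}(\mathbf{y})_{y_iy_l}$ on the characteristic path or the constant $\mathrm{K}(0)_{y_sy_t}$ off it --- contributes the scalar $\mathrm{k}^{(2)}$ once we set $y=0$, and every internal vertex of arity $k$ contributes $\ell^{(k)}$, while Step~4 supplies the global sign $(-1)^p$ with $p=|\mathrm{vert}(t)|$. Hence $F(T)\big|_{y=0}=(-1)^p(\mathrm{k}^{(2)})^{E}\prod_{v}\ell^{(\mathrm{ar}(v))}$, where $E$ is the total number of edges. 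The arithmetic of trees then does the work: for $n>2$ every external point is a leaf of degree one attached to an internal vertex, so there are exactly $n$ external edges, and since the underlying graph is a tree on $n+p$ vertices one has $E=n+p-1$ and therefore exactly $p-1$ internal edges. Because the internal vertices span a connected subtree, the count $p-1$ is precisely $\mathrm{e}(t)$, the number of edges joining internal vertices.

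Multiplying through by $(\mathrm{k}^{(2)})^{-n}$ then cancels exactly the $n$ external-leg propagators and yields
$$
  (\mathrm{k}^{(2)})^{-n}F(T)\big|_{y=0}
  =(\mathrm{k}^{(2)})^{p-1}\prod_{v}\bigl(-\ell^{(\mathrm{ar}(v))}\bigr)
  =(\mathrm{k}^{(2)})^{\mathrm{e}(t)}\prod_{v\in\mathrm{vert}(t)}\bigl(-\ell^{(\mathrm{ar}(v))}\bigr).
$$
Summing over $T$ and using $\mathrm{k}^{(n)}(\mathrm{k}^{(2)})^{-n}=(\mathrm{k}^{(2)})^{-n}\sum_{T}F(T)\big|_{y=0}$ gives the stated formula, once the index set is translated: by the Remark following Definition \ref{def:graphicalrules} the admissible trees $\mathcal{D}_n$ are in bijection with Schr\"oder trees on $n-1$ leaves, i.e.\ with $PST(n-1)$ (the marked root edge corresponding to the external point $1$), and in the single-letter case this bijection is genuinely onto tree shapes, since the leaf decorations no longer distinguish anything.

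The step I expect to be the main obstacle is the bookkeeping of the third paragraph: pinning down that $\mathrm{e}(t)=p-1$ and that the $(\mathrm{k}^{(2)})^{-n}$ normalisation cancels precisely the external legs requires the hypothesis $n>2$ (for $n=2$ the single edge is shared by two external points and the identity $\mathrm{e}(t)=p-1$ degenerates), and it requires checking that the root edge to the external point $1$ is counted among the $n$ external edges rather than as an internal one. The remaining subtlety is purely notational --- confirming that $PST(n-1)$ is the same indexing set as $\mathcal{D}_n$ --- and amounts to unwinding the bijection of the Remark.
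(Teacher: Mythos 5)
Your proof is correct and takes essentially the same route as the paper: the paper's own ``proof'' is a one-sentence assertion that the formula follows from the admissible-tree expansion of Theorem \ref{thm:treeexpansion} specialised to the single-letter case. Your argument --- evaluating \eqref{Feynman} at the origin (legitimate since the centred/regular hypotheses give $\Phi=0$ at $\mathbf{y}=0$), noting that every edge then contributes $\mathrm{k}^{(2)}$ and every internal vertex of arity $k$ contributes $\ell^{(k)}$ with global sign $(-1)^p$, and counting $E=n+p-1$ edges so that exactly $p-1=\mathrm{e}(t)$ of them join internal vertices --- supplies precisely the details the paper leaves implicit, including the role of the hypothesis $n>2$ and the identification of $\mathcal{D}_n$ with $PST(n-1)$.
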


%%%%%%%%%%%%%%%%%%%%%%%%%%%%%%%%
%%%%%%%%%%%%%%%%%%%%%%%%%%%%%%%%
%%%%%%%%%%%%%%%%%%%%%%%%%%%%%%%%
%%%%%%%%%%%%%%%%%%%%%%%%%%%%%%%%

\end{document}